%% file: main.tex
\PassOptionsToPackage{svgnames}{xcolor}
\documentclass[sigconf, nonacm]{acmart}

\usepackage[utf8]{inputenc}
\usepackage[T1]{fontenc}
\usepackage{tlatex}
\usepackage{color}
\usepackage{algorithm}
\usepackage{algpseudocode}
\usepackage{svg}
\usepackage{subcaption}
\usepackage{verbatim}
\usepackage{xcolor}
\definecolor{boxshade}{gray}{.85}
\setboolean{shading}{true}

\newcommand\vldbdoi{XX.XX/XXX.XX}
\newcommand\vldbpages{XXX-XXX}
\newcommand\vldbvolume{20}
\newcommand\vldbissue{1}
\newcommand\vldbyear{2026}
\newcommand\vldbauthors{\authors}
\newcommand\vldbtitle{\shorttitle} 
\newcommand\vldbavailabilityurl{https://github.com/oceanbase/oceanbase/tree/develop/src/storage/tx}
\newcommand\vldbpagestyle{plain} 

\begin{document}
\title{A Tree-Structured Two-Phase Commit Framework for OceanBase: Optimizing Scalability and Consistency}

\settopmatter{authorsperrow=4}

\author{Quanqing Xu}
\affiliation{%
  \institution{OceanBase, Ant Group}
}
\authornote{Quanqing Xu and Chen Qian contributed equally to this work.}

\author{Chen Qian}
\affiliation{%
  \institution{OceanBase, Ant Group}
}
\authornotemark[1]

\author{Chuanhui Yang}
\affiliation{%
  \institution{OceanBase, Ant Group}
}
\authornote{Chuanhui Yang and Fanyu Kong are corresponding authors.}

\author{Fanyu Kong}
\affiliation{%
  \institution{OceanBase, Ant Group}
}
\authornotemark[2]

\author{Guixiang Liu}
\affiliation{%
  \institution{OceanBase, Ant Group}
}

\author{Fusheng Han}
\affiliation{%
  \institution{OceanBase, Ant Group}
}

\author{Zixiang Zhai}
\affiliation{%
  \institution{OceanBase, Ant Group}
}







\begin{abstract}
Modern distributed databases face challenges in achieving transactional consistency across distributed partitions. Traditional two-phase commit (2PC) protocols incur high coordination overhead and latency, and require complex recovery for dynamic partition transfers. This paper introduces a novel tree-shaped 2PC framework for OceanBase that leverages single-machine log streams to address these challenges through three innovations. First, we propose log streams as atomic participants, replacing partition-level coordination. By treating each log stream as the commit unit, a transaction spanning $N$ co-located partitions interacts with one participant, reducing coordination overhead by orders of magnitude (e.g., 99\% reduction for $N=100$). Second, we design a tree-shaped 2PC protocol with coordinator-rooted DAG topology that dynamically handles partition transfers by recursively constructing commit trees. When a partition migrates during a transaction, the protocol embeds migration contexts as leaf nodes, eliminating explicit participant list updates, resolving circular dependencies, and ensuring linearizable commits under topology changes. Third, we introduce prepare\_unknown and trans\_unknown states to prevent consistency violations when participants lose context. These states signal uncertainty during retries, avoiding erroneous aborts from ``lying'' participants while isolating users from ambiguity. Experimental evaluation demonstrates performance approaching that of single-machine transactions, with reduced latency and bandwidth consumption, validating the framework's effectiveness for modern distributed databases.
\end{abstract}

\maketitle

\pagestyle{\vldbpagestyle}
\begingroup\small\noindent\raggedright\textbf{PVLDB Reference Format:}\\
\vldbauthors. \vldbtitle. PVLDB, \vldbvolume(\vldbissue): \vldbpages, \vldbyear.\\
\href{https://doi.org/\vldbdoi}{doi:\vldbdoi}
\endgroup
\begingroup
\renewcommand\thefootnote{}\footnote{\noindent
This work is licensed under the Creative Commons BY-NC-ND 4.0 International License. Visit \url{https://creativecommons.org/licenses/by-nc-nd/4.0/} to view a copy of this license. For any use beyond those covered by this license, obtain permission by emailing \href{mailto:info@vldb.org}{info@vldb.org}. Copyright is held by the owner/author(s). Publication rights licensed to the VLDB Endowment. \\
\raggedright Proceedings of the VLDB Endowment, Vol. \vldbvolume, No. \vldbissue\ %
ISSN 2150-8097. \\
\href{https://doi.org/\vldbdoi}{doi:\vldbdoi} \\
}\addtocounter{footnote}{-1}\endgroup

\ifdefempty{\vldbavailabilityurl}{}{
\vspace{.3cm}
\begingroup\small\noindent\raggedright\textbf{PVLDB Artifact Availability:}\\
The source code, data, and/or other artifacts have been made available at \url{\vldbavailabilityurl}.
\endgroup
}

\section{Introduction}
\label{sec:intro}
In the era of large-scale distributed databases, ensuring transactional consistency across geographically dispersed data partitions remains a fundamental challenge. As modern applications increasingly rely on distributed database systems to handle massive workloads, the need for efficient transaction coordination mechanisms becomes paramount. Traditional 2PC protocols~\cite{lampson1993new, atif2009analysis, desai1996performance, samaras1995two} have served as the cornerstone for maintaining ACID properties in distributed environments, providing a reliable framework for coordinating transaction commits across multiple participants. However, as database systems scale to support thousands of partitions and millions of concurrent transactions, these foundational protocols face significant challenges in meeting the performance and scalability requirements of modern cloud-native applications.

Traditional 2PC protocols face two critical limitations. First, \textit{scalability bottlenecks} arise from partition-centric designs that treat each partition as an independent participant: a transaction spanning 100 partitions incurs 100 prepare/commit phases even when all reside on one machine, causing quadratic growth in coordination messages and log synchronization and severely limiting throughput under high concurrency. Second, \textit{complicated dynamic topologies} emerge during partition transfers, essential for load balancing in elastic clouds. Existing approaches either require costly re-validation of participant lists post-transfer~\cite{ezhilchelvan2018non} or introduce circular dependencies between partitions and logs~\cite{kolltveit2007circular, alkhatib2002transaction}, undermining consistency when partitions migrate during active transactions.

Recent advances in distributed transaction management have made significant strides, yet fundamental limitations persist. Systems like Spanner~\cite{corbett2013spanner} and CockroachDB~\cite{taft2020cockroachdb} still rely on per-partition coordination, inheriting the scalability bottlenecks of traditional approaches. While log-structured systems demonstrate the potential of log-centric architectures, they do not integrate log streams as atomic units for transaction coordination, leaving the scalability and dynamic topology challenges unresolved.

In OceanBase's single-machine log stream architecture, co-located partition leaders within a unit are placed on the same machine, enabling all these partitions to write to a shared log stream. This architectural choice provides a unique opportunity to optimize 2PC: by using the log stream as a participant instead of individual partitions, we can dramatically reduce coordination overhead while naturally handling partition transfers through the log stream's inherent structure. This paper proposes a novel log-stream-based 2PC framework that addresses the scalability and dynamic topology challenges through three key innovations. Our approach fundamentally shifts the atomic unit from partitions to log streams, introduces a dynamic tree-structured protocol that adapts to partition migrations, and provides a principled mechanism for handling context loss scenarios. 

Therefore, the contributions of this paper are as follows:

\begin{itemize}
    \item \textbf{Log Stream as Participant}: We abstract log streams as the new atomic unit instead of individual partitions, reducing participant count by orders of magnitude. A transaction with 100 co-located partitions now interacts with one log stream participant, cutting coordination overhead by 99\% and achieving single-machine transaction performance. 
    
    \item \textbf{Dynamic Tree-Structured 2PC}: Unlike traditional hierarchical 2PC \cite{transactionbook} with static structures, we introduce a \textit{dynamic} tree-structured protocol where the participant list forms a DAG that evolves as partitions migrate. The structure recursively tracks log stream dependencies, automatically constructing trees on-the-fly and eliminating explicit participant list updates, resolving the ``circular transaction'' problem \cite{alkhatib2002transaction}.
    
    \item \textbf{Unknown-State Mechanism for Consistency Preservation}: When participants lose transaction context, they may ``lie'' about prior decisions. We introduce \textit{prepare\_unknown} and \textit{trans\_unknown} states: participants signal uncertainty, and recreated coordinators return \textit{trans\_unknown} to users, preventing forced commits while shielding users from ambiguous states (detailed in Section~\ref{sec:unknown}).
\end{itemize}

Our contributions advance both theoretical understanding and practical implementation: the log stream abstraction challenges the assumption that partitions must be atomic units; the dynamic tree-structured protocol handles changing topologies during execution; and the unknown-state mechanism ensures consistency under context loss. This framework meets two design goals: (1) \textbf{high performance for static transactions}—achieving single-machine transaction latency (1.2ms) through log stream aggregation; and (2) \textbf{simplicity for dynamic transactions}—handling partition transfers with minimal complexity through automatic tree construction. Rigorous evaluation on OceanBase demonstrates linear scaling in commit throughput under heavy transfer workloads, positioning our solution as a critical advancement for next-generation distributed databases in cloud environments.

The paper is organized as follows. \S\ref{sec:background} provides the background on the OceanBase database, ACID properties, and log stream topologies during partition transfer. \S\ref{sec:tree_2pc} describes the tree-shaped 2PC protocol, including optimizations for transaction latency and the state machine design. In \S\ref{sec:transfer}, we discuss the transfer process and how it handles concurrent partition transfers during transaction execution. \S\ref{sec:unknown} presents the unknown-state mechanism for consistency preservation. \S\ref{sec:discussion} discusses optimizations and the TLA+ proof, and \S\ref{sec:experimental_evaluation} presents the experimental evaluation to demonstrate the high performance and scalability of our framework. We review the related work in \S\ref{sec:related_works} and present the conclusions in \S\ref{sec:Conclusions}.

\section{Background}
\label{sec:background}

\subsection{OceanBase Database}
\label{sec:background:ob}

OceanBase is a high-performance distributed relational database system engineered to address the scalability, cross-regional fault tolerance, and cost-efficiency demands of modern internet-scale applications~\cite{DBLP:journals/pvldb/YangYHZYYCZSXYL22}. Employing a shared-nothing architecture, OceanBase emphasizes superior scalability and high availability. The latest iteration, OceanBase 4.0, introduces Paetica~\cite{paetica}, an innovative architecture that merges the shared-nothing and shared-everything paradigms, enabling seamless transitions between standalone and distributed modes.

A key architectural feature of OceanBase is its single-machine log stream design, where co-located partition leaders within a unit are placed on the same machine, enabling all these partitions to write to a shared log stream. This design provides the foundation for our log-stream-based 2PC optimization, as it naturally aggregates multiple partitions into a single log stream participant, reducing coordination overhead while maintaining transaction consistency.

\subsection{ACID Property and Traditional 2PC}
\label{sec:background:ACID}

Traditional 2PC is a fundamental protocol for ensuring atomicity in distributed transactions~\cite{lampson1993new}. The protocol consists of two phases: (1) \textit{prepare phase}, where the coordinator sends prepare requests to all participants, and each participant votes YES (by writing a prepare log) or NO; (2) \textit{commit/abort phase}, where the coordinator collects all votes and sends commit or abort messages to all participants based on the voting results. If all participants vote YES, the transaction commits; otherwise, it aborts. This protocol ensures that all participants either commit or abort together, maintaining transaction atomicity.

In the single-machine log stream solution, partitions remain the fundamental units for transactional read and write operations, and concurrency control still operates at the partition level. To ensure ACID characteristics~\cite{xie2014salt}, the implementation of atomicity requires that all units that have participated in concurrency control successfully pass isolation and persistence checks. Therefore, if we use log streams as participants in the 2PC protocol, the minimum set of participants must include all log streams where partitions participating in concurrency control are finally located. We refer to this as the \textbf{minimum set requirement}. Any participant list containing this minimum set is also correct. The key challenge lies in how to obtain all such log streams during the 2PC phase and complete the overall commit process.

\subsection{Log Stream Topologies}
\label{sec:background:design}

Partition transfer is a critical mechanism in OceanBase for load balancing and resource optimization. However, it introduces complexity in transaction coordination: when partitions migrate during an active transaction, the participant list must be updated to include all log streams that have hosted the transaction's partitions.

In a naive approach, the coordinator constructs the participant list based only on log streams where partitions performed read/write operations at transaction start. However, if partition transfer occurs before transaction commit, this list violates the minimum set requirement (Section~\ref{sec:background:ACID}). As illustrated in Figure~\ref{fig:1}, a transaction initially involves log streams A and B, but after partition transfer, log stream C also creates transaction context and must participate in concurrency control. The naive approach misses log stream C, leading to incorrect participant lists.

\begin{figure}
  \centering
  \includegraphics[width=\linewidth]{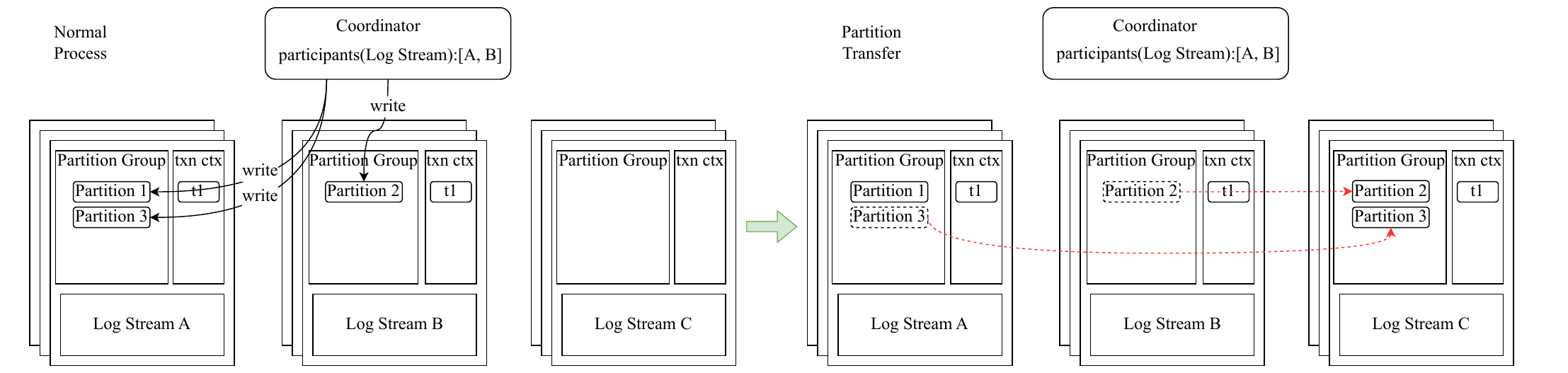}
  \caption{Participant list changes during a transaction.}
  \label{fig:1}
\end{figure}

To correctly identify all relevant log streams, we observe that transaction contexts are created in two scenarios: (1) when partitions perform read/write operations, and (2) when partitions are transferred. The log streams from read/write operations are directly available, while those from transfers can be recursively obtained through transfer records: when a partition transfers from log stream $L_s$ to $L_d$, the source log stream records the destination in its transaction context. By recursively following these transfer records, we can construct a \textit{log stream tree} that includes all log streams that have hosted the transaction's partitions. Example~\ref{example:participant_list_change} illustrates this construction, and Theorem~\ref{thm:log_stream_tree} establishes that this tree satisfies the minimum set requirement.

\begin{figure}
  \centering
  \includegraphics[width=\linewidth]{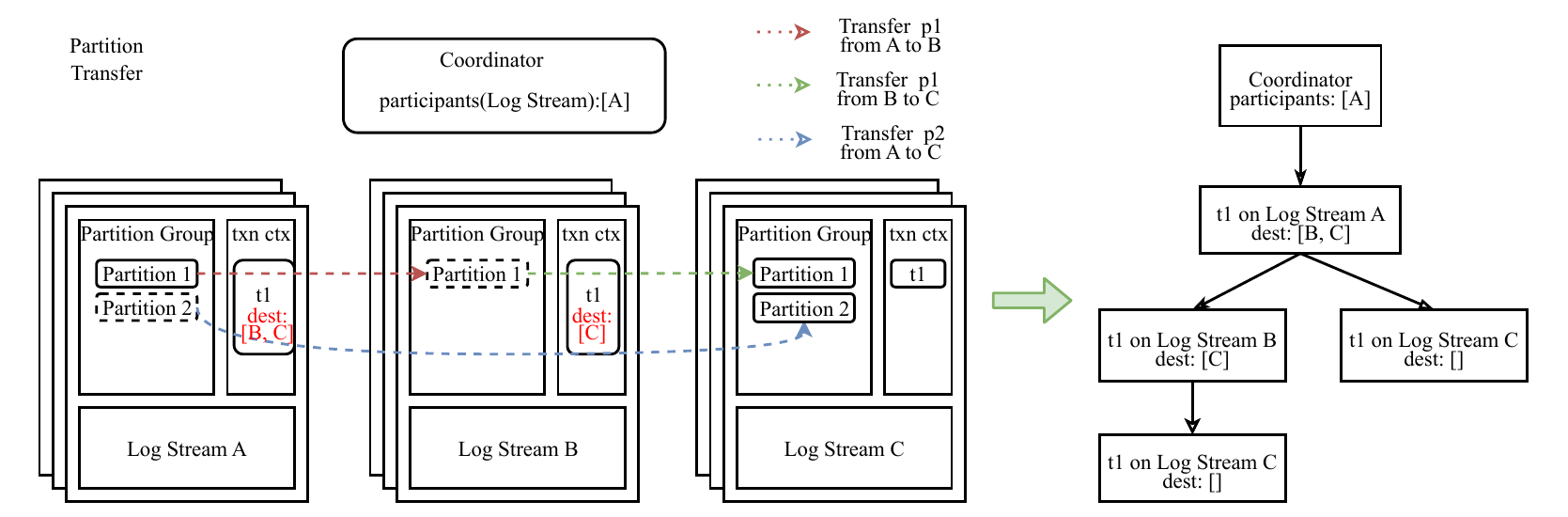}
  \caption{Log stream tree during execution.}
  \label{fig:2}
\end{figure}

\begin{example} \label{example:participant_list_change}
    As shown on the left side of Figure~\ref{fig:2}, transaction t1 involves partitions Partition1 and Partition2 during execution, both of which are on Log Stream (LS) A. Afterwards, Partition1 is first transferred from Log Stream A to Log Stream B; then from Log Stream B to Log Stream C; Partition2 is transferred from Log Stream A to Log Stream C. The success of the transfer enables transaction t1 on Log Stream A to record the destinations Log Stream B and Log Stream C generated by the transfer; similarly, transaction t1 on Log Stream B records the destination Log Stream C generated by the transfer. The log stream tree is shown on the right side, where the coordinator is the head node and the log streams where the partitions participating in concurrency control are finally located serve as the leaf nodes. 
\end{example}  

\begin{theorem} \label{thm:log_stream_tree}
    The log stream tree meets the minimum set requirement.
\end{theorem}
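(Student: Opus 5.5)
We argue by induction on the number of transfers a partition undergoes after it first joins concurrency control for the transaction. The aim is to show that every log stream where such a partition is \emph{finally} located is a node of the log stream tree. We first make the tree precise. Its root is the coordinator. Its children are the log streams where partitions performed read/write operations, which the coordinator knows directly. Recursively, whenever the transaction context on a node $L_s$ records a transfer destination $L_d$, we add $L_d$ as a child of $L_s$. Since a node may be reached along several paths, the structure is really a DAG (as in Example~\ref{example:participant_list_change}, where C is reached from both A and B). We only need reachability from the root, so this causes no difficulty. By the definition in Section~\ref{sec:background:ACID}, it suffices to show that the set of reachable nodes contains the final log stream of every partition that participated in concurrency control. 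Any superset is acceptable under that requirement, so extra nodes do no harm.

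Fix such a partition $p$. Let $L_0$ be the log stream on which $p$ performed its read/write operation, and let $L_0 \to L_1 \to \cdots \to L_k$ be the chain of transfers of $p$ that occur before commit, so that $L_k$ is its final location. We prove by induction on $i$ that $L_i$ is reachable from the root.
\begin{itemize}
\item \textbf{Base case.} $L_0$ is a direct child of the coordinator, because read/write log streams are part of the initial participant list.
\item \textbf{Inductive step.} Suppose $L_i$ is reachable. The transfer of $p$ from $L_i$ to $L_{i+1}$ happens while the transaction is still active and $p$ holds transaction state on $L_i$. By the transfer mechanism of Section~\ref{sec:background:design}, the transaction context on $L_i$ therefore records $L_{i+1}$ as a destination. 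Hence $L_{i+1}$ is a child of $L_i$ and is reachable.
\end{itemize}
Setting $i = k$ gives the claim for $p$. Since $p$ was arbitrary, the minimum set requirement holds.

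The main obstacle is justifying the inductive step rigorously. It needs two invariants about the transfer protocol:
\begin{enumerate}
\item[(a)] Whenever a partition carrying the transaction's state leaves $L_i$, the source log stream $L_i$ holds a transaction context, and that context durably records the destination $L_{i+1}$. The record must survive, for example by being logged together with the transfer, so that it can still be followed at commit time.
\item[(b)] No transfer of $p$ can occur after the tree has been collected for the prepare phase. Otherwise $L_k$ would not be final.
\end{enumerate}
For (a), I would appeal to the statement that a successful transfer makes the source's transaction context record the destination. The transfer also creates a context on $L_{i+1}$, which rules out chains that skip a node.

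For (b), I would argue that transfers are serialized with the prepare phase: a transfer that races with prepare either completes before the source's context is visited, and is then recorded, or the transfer is blocked or aborted. Making this ordering argument precise is where I expect most of the care to go, and I would state it explicitly as an assumption on the transfer protocol if the paper has not yet established it.

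Finally, we check that the recursion terminates. There are finitely many transfers before commit, and each node is expanded only once. Hence the tree is finite and can actually be constructed.
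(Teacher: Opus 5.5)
Your proposal is correct and matches the paper's argument: the paper also follows each partition's transfer chain $L_0 \to \cdots \to L_k = L_{final}(p)$ and observes that the recursive construction includes every $L_i$; your induction on $i$ just makes that step explicit, and the paper's extra minimality and ``leaves are exactly the final locations'' claims are unnecessary, since any superset is acceptable. One caution: invariant (b) is not needed, and it is false for the paper's design, which allows transfers during PREPARE and COMMIT and handles them separately through \textit{interm\_parts} and \textit{incr\_parts} (Theorems~\ref{thm:transfer_principle} and~\ref{thm:transfer_correctness}) rather than by blocking or aborting them, so for this theorem you should read ``final'' as relative to the snapshot of the tree being evaluated, as the paper does.
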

The proof is given in Appendix~\ref{sec:appendix:log_stream_tree}.


\section{Tree-Shaped 2PC}
\label{sec:tree_2pc}

\subsection{Tree Structure Design}
\label{sec:tree_2pc:design}

Following Theorem~\ref{thm:log_stream_tree}, the log stream tree satisfies the minimum set requirement for the participant list. Tree-shaped 2PC (a.k.a., hierarchical 2PC)~\cite{transactionbook}, which organizes coordinator and participants into a tree, naturally fits our design. Each node with nonempty children acts as both a coordinator (for its children) and a participant (for its parent); the set of nodes in its subtrees forms its participant list. Nodes without children act only as participants. 

However, traditional hierarchical 2PC still has room for improvement in transaction response time. To this end, this section introduces optimizations of the traditional 2PC protocol in OceanBase, then combines them with the tree-shaped structure to achieve high performance while maintaining correctness.


\subsection{Optimization for Transaction Latency}
\label{sec:tree_2pc:optimization}

    



Traditional 2PC consists of two stages: prepare and commit/abort. It requires the coordinator and participants to synchronously write their prepare\footnote{The prepare log for coordinator may be omitted in some implementations of 2PC protocol.} and commit/abort logs. The coordinator replies to the client after its commit log is persisted, incurring two RPC messages and three log synchronizations as transaction latency. Since the two RPC messages are mandatory, our optimization focuses on reducing I/O operations.

OceanBase 2PC optimizes this by eliminating coordinator logging: the coordinator never writes logs. However, this creates two challenges: (a) in abnormal scenarios, the coordinator cannot determine all participants' states without its own prepare log, so each participant must record all participants in its prepare log; (b) participants cannot confirm the final transaction status through the coordinator due to the lack of commit/abort logs, requiring a mechanism to ensure participants can independently determine transaction status before releasing resources.

Initially, OB 2PC addressed these challenges by adding a clear stage: all participants synchronously write a clear log after commit/abort before entering TOMBSTONE state. The coordinator responds to the client after receiving prepare OK/NO messages, reducing transaction latency to two RPC messages and one I/O operation (participants' prepare logs). However, clear logs increase resource consumption. In later versions, the clear stage is removed: the coordinator synchronously writes the commit/abort log \textit{after} replying to the client, maintaining one I/O operation during transaction response time while ensuring correctness.

\subsection{State Machine}
\label{sec:tree_2pc:state_machine}

Following the optimization in Section~\ref{sec:tree_2pc:optimization}, we present the tree-shaped 2PC protocol in OceanBase. The participant list consists of the transaction execution log stream or the target log stream of partition transfer. Transfer destinations are recorded in the source transaction context's participant list, with detailed implementation in Section~\ref{sec:transfer}. To simplify the design, we abstract all tree nodes uniformly without distinguishing between leaf and non-leaf nodes. Each node maintains the following variables:
\begin{itemize}
    \item \textit{status}: The node's own status (OK or NO), initially OK.
    \item \textit{state}: The 2PC state (RUNNING, PREPARE, COMMIT, ABORT, TOMBSTONE), initially RUNNING.
    \item \textit{participant list}: The list of child participants for the current state, initially empty.
    \item \textit{collected participant list}: The list of participants whose messages have been collected, initially empty.
\end{itemize}

\begin{figure}
  \centering
  \includegraphics[width=\linewidth]{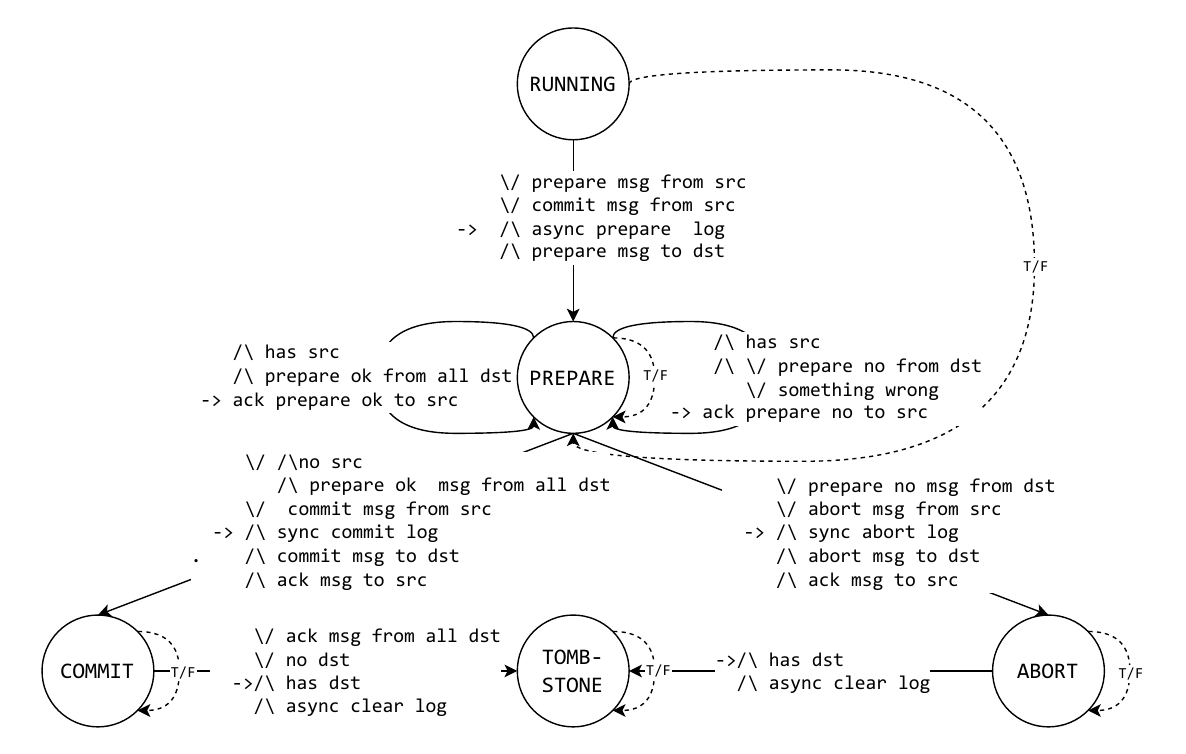}
  \caption{Tree-shaped two-phase commit state machine. Solid lines represent normal processes; dotted lines represent abnormal timeouts (T) and downtime (F). \texttt{/\textbackslash} represents AND, \texttt{\textbackslash/} represents OR. A $\rightarrow$ B means that in the current state, input A generates output B and transitions to the next state.}
  \label{fig:3}
\end{figure}

The state machine is illustrated in Figure~\ref{fig:3}. We describe the protocol in two phases:

\paragraph{Prepare Phase.} Initially, the node is in RUNNING state. Upon receiving a prepare message from its parent or a commit request from the scheduler, it asynchronously writes a prepare log, sends prepare messages to all children, and resets the \textit{collected participant list}. After the prepare log persists, the node enters PREPARE state. Upon receiving prepare OK/NO messages from children, it adds them to \textit{collected participant list} and sets \textit{status} to NO if any message is prepare NO.

\paragraph{Commit Phase.} When a node receives prepare OK from all children (i.e., \textit{collected participant list} equals the set of child nodes), it sends prepare OK to its parent (if it exists) or commit messages to all children (if it is the root). A node in PREPARE state receiving a commit message from its parent synchronously writes a commit log, sends commit messages to all children, resets \textit{collected participant list}, and sends ACK to its parent, entering COMMIT state. Finally, after receiving ACK from all children, it asynchronously writes a clear log and enters TOMBSTONE state.

The abort phase follows analogous logic. Detailed state machine specifications are provided in Appendix~\ref{sec:appendix:state_machine}.

\subsection{Summary}
\label{sec:tree_2pc:summary}

We summarize the performance characteristics of the tree-shaped 2PC algorithm under normal execution (assuming successful commit, $N$ is the average number of child participants per node, $H$ is the average tree height). The following metrics are for a given tree; scaling behavior (how latency and resource consumption evolve with cluster size or under partition transfer) is not analyzed in this section and is evaluated empirically in Section~\ref{sec:experimental_evaluation}.

\paragraph{Transaction Latency.} The time from when the user sends a COMMIT request until receiving a response. It requires $2H$ message roundtrips (PREPARE requests from coordinators/sub-coordinators and PREPARE responses from participants/sub-participants) and $1$ log synchronization (participants' PREPARE logs).

\paragraph{Row Lock Release Latency.} The time from when a lock is acquired until it is released, representing the conflict footprint. It requires $3H$ message roundtrips (PREPARE, COMMIT requests from coordinators/sub-coordinators, PREPARE responses from participants) and $2$ log synchronizations (participants' PREPARE and COMMIT logs).

\paragraph{Transaction Resource Consumption.} Total resources consumed by a transaction, in messages and logs: $5NH$ message transmissions (PREPARE, COMMIT, and RELEASE from coordinators; PREPARE and COMMIT responses from participants; see Section~\ref{sec:discussion:opt:release} for RELEASE) and $2NH$ synchronously replicated logs (participants' PREPARE and COMMIT logs).

\section{Transfer Process}
\label{sec:transfer}

\subsection{Transfer Principles}
\label{sec:transfer:principles}

In OceanBase 4.0, tablets~\cite{chang2008bigtable} and partitions serve as partitioning units: tablets are the smallest physical storage units, partitions are user-defined logical units. A partition may contain one or more tablets. Initially, partitions are distributed symmetrically across machines. As data changes, OceanBase uses partition transfer~\cite{atif2009analysis} to migrate partitions between servers for load balancing.

Section~\ref{sec:tree_2pc} presents the static tree-shaped 2PC protocol where the log stream tree remains unchanged. However, partition transfers can occur at any point during 2PC execution, dynamically altering the log stream tree (participants). To ensure ACID properties, we establish the following transfer principle. It is stated in terms of logs and messages; the two views are equivalent because 2PC messages are generated from and reflect the logged state (e.g., PREPARE/COMMIT messages correspond to the state persisted in PREPARE/COMMIT logs).

\begin{theorem}[Transfer principle] \label{thm:transfer_principle}
    \textbf{(Logs.)} Transaction logs written before the transfer-out log must be applied to the destination log stream through the transfer process. Transaction logs written after the transfer-out log must account for the transfer's impact (e.g., include the destination in the participant list).
    \textbf{(Messages.)} 2PC messages (PREPARE, COMMIT, ABORT) sent before transfer completion must be applied to the destination log stream through the transfer process. Messages sent after transfer completion must account for the transfer's impact. The log and message formulations are equivalent.
\end{theorem}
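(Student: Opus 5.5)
The transfer principle is really a correctness requirement, so I would prove it by contradiction against the minimum set requirement (Section~\ref{sec:background:ACID}) and atomicity, and then argue the log/message equivalence separately. Fix a transaction $T$ and a partition $P$ moved from $L_s$ to $L_d$. Let the transfer-out log on $L_s$ be the cut point $\tau$. Every transaction log record of $T$ on $L_s$ has a position in $L_s$'s totally ordered, replicated log. So it lies either strictly before $\tau$ or strictly after it. I would treat these two cases separately and show that violating the stated rule in either case yields an execution in which $P$'s final state disagrees with $T$'s global decision.

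\textbf{Before the cut.} Suppose a record $r$ of $T$ (redo, PREPARE or COMMIT) precedes $\tau$ but is not carried to $L_d$ by the transfer. After the transfer, $P$'s concurrency control and durability live only on $L_d$, and $L_s$ no longer serves $P$. Replaying $L_d$ after a failure then reconstructs $P$ without the effect of $r$. I would build the concrete bad run: $T$ commits globally, a crash follows, and recovery of $L_d$ drops $T$'s writes to $P$ or leaves its locks undecided. This contradicts atomicity and durability, so $r$ must be applied through the transfer.

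\textbf{After the cut.} Suppose a record $r'$ written after $\tau$ ignores the transfer, for instance a PREPARE log whose participant list omits $L_d$. Since the coordinator writes no logs (Section~\ref{sec:tree_2pc:optimization}), recovery relies on the participant lists stored in prepare logs. Omitting $L_d$ therefore makes the reconstructed tree violate the minimum set requirement. Theorem~\ref{thm:log_stream_tree} guarantees correctness only when $L_d$ appears as $L_s$'s child, so I would exhibit a run where $L_s$'s subtree commits while $L_d$ never receives a decision, or aborts on timeout. Both cases give a contradiction, which establishes the log formulation.

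\textbf{Equivalence with messages.} I would map each message to the log state it reflects: a node sends PREPARE, COMMIT or ABORT only after the corresponding log is written or persisted, per the state machine in Section~\ref{sec:tree_2pc:state_machine}. Transfer completion is the replay of $\tau$ at $L_d$, so ``sent before transfer completion'' corresponds to ``logged before $\tau$'', and likewise for ``after''. Under this map each message rule becomes the matching log rule, and conversely.

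\textbf{Main obstacle.} The hard step is making this equivalence rigorous when messages are in flight. A message can be generated from a pre-$\tau$ state but delivered after completion, so I would have to fix each message's order position by its generating log and not by its delivery time. I would handle in-flight messages by showing they are either forwarded through the transfer context or reproduced by retry.
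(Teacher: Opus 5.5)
Your proposal is correct and follows essentially the same route as the paper: a proof by contradiction split at the transfer-out log, where the pre-cut case breaks atomicity and completeness at the destination, the post-cut case breaks the minimum set requirement, and the message version follows because 2PC messages are generated from, and reflect, the logged state. Your explicit crash-and-recovery counter-runs and your treatment of in-flight messages go further than the paper, which simply asserts the log/message equivalence and does not address delivery timing.
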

The proof is given in Appendix~\ref{sec:appendix:transfer_principle}.

\subsection{Dynamic Partitioning and Transfer}
\label{sec:transfer:con_transfer_transaction}

\begin{figure}
  \centering
  \includegraphics[width=\linewidth]{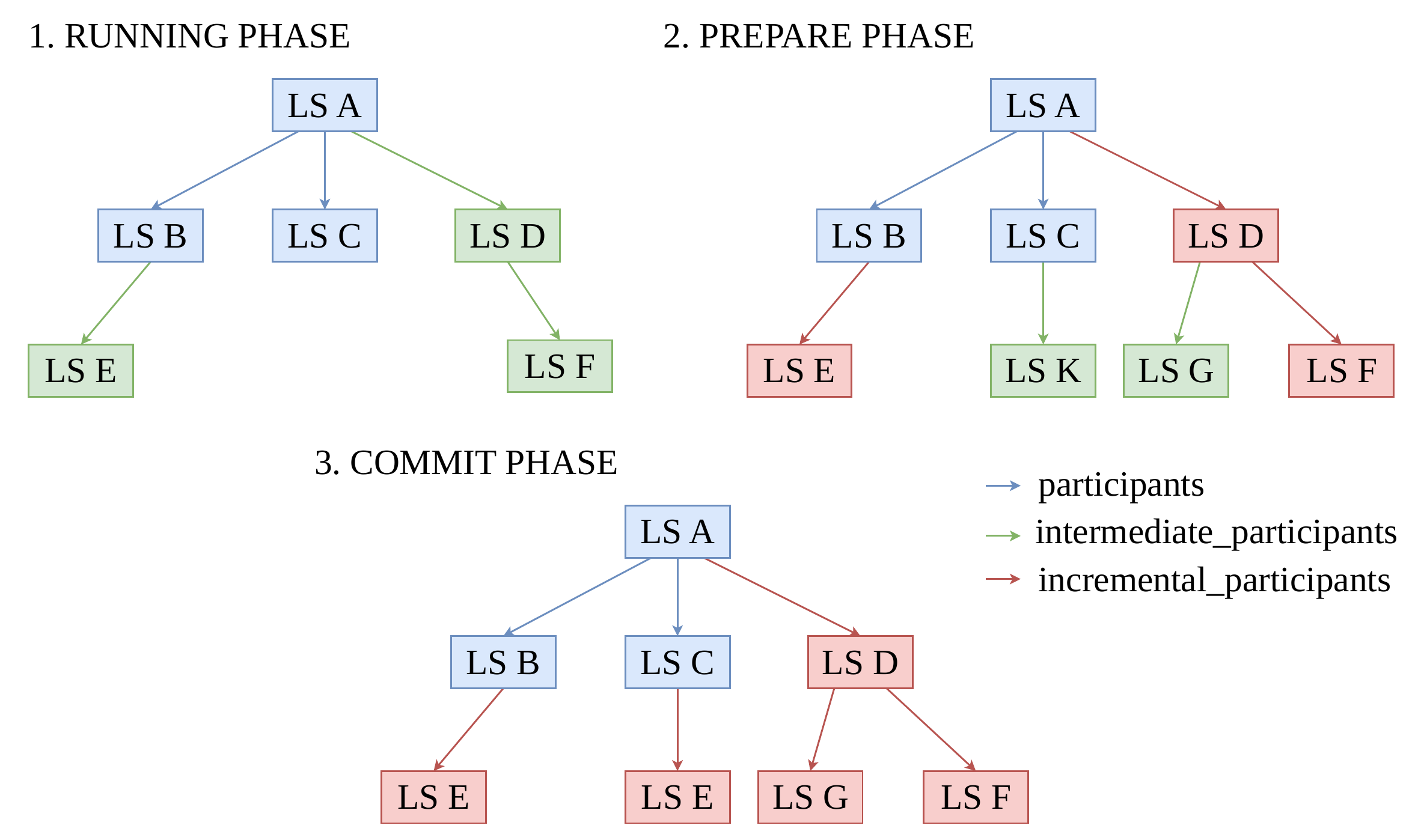}
  \caption{Transfer during tree-shaped 2PC.}
  \label{fig:transfer_2pc}
\end{figure}

To handle partition transfers during 2PC execution, we first consider an idealized model where partition transfer and 2PC log synchronization are atomic. To ensure complete participant lists at each stage per Theorem~\ref{thm:transfer_principle}, we maintain participants from previous stages for the current stage, and participants from the current stage for the next stage. The transaction context maintains three participant variables:

\begin{itemize}
    \item \textit{participants}: Participants from the SQL layer, persisted in both log stream and transaction context.
    \item \textit{incr\_parts}: Participants from partition transfers in the previous stage, included in the current stage's participant list, persisted in both log stream and transaction context.
    \item \textit{interm\_parts}: Participants from partition transfers in the current stage, included in the next stage's participant list, persisted only in transaction context.
\end{itemize}

Algorithm~\ref{alg:transfer_2pc} illustrates the commit process when transfers occur during RUNNING, PREPARE, and COMMIT stages (assuming successful commit). For each transfer $trans_i$, $trans_i.src$ and $trans_i.dst$ denote the source and destination log streams. The \textsc{migrateData} procedure ships transaction data and state from source to destination. The \textsc{commitLogs} procedure merges \textit{interm\_parts} into \textit{incr\_parts} and synchronizes 2PC logs containing \textit{participants} and \textit{incr\_parts} via Paxos. Example~\ref{example:transfer_2pc} demonstrates this process.

\begin{algorithm} 
\caption{Commit with Concurrent Partition Transfer}
\label{alg:transfer_2pc}
\begin{algorithmic}[1]
\Require
   Transaction $T$, scheduler $s$, log stream tree $tree$
\Statex \texttt{/* $T$ enters the RUNNING state */}
\State $s$ finds initial tablets and corresponding log streams $log_{initial}$. 
\State $tree \leftarrow log_{initial}$
\Statex \texttt{/* Transfer $trans_1$ occurs during RUNNING */}
\State $trans_1.src.interm\_parts.add(trans_1.dst)$
\State \textsc{migrateData($trans_1.src$, $trans_1.dst$)}
\Statex \texttt{/* $T$ enters PREPARE state */}
\State \textsc{commitLogs($tree$)}
\Statex \texttt{/* Transfer $trans_2$ occurs during PREPARE */}
\State $trans_2.src.interm\_parts.add(trans_2.dst)$
\State \textsc{migrateData($trans_2.src$, $trans_2.dst$)}
\Statex \texttt{/* $T$ enters COMMIT state */}
\State \textsc{commitLogs($tree$)}
\Statex \texttt{/* Transfer $trans_3$ occurs during COMMIT */}
\State \textsc{migrateData($trans_3.src$, $trans_3.dst$)}
\Statex 
\Procedure{\textsc{migrateData}}{$src$, $dst$}
    \State Migrate data and transaction state of $src$ context to $dst$.
\EndProcedure
\Statex
\Procedure{\textsc{commitLogs}}{$tree$}
    \For{all nodes $p$ \in $tree$}
        \State $p.incr\_parts.add(p.interm\_parts)$
        \State \textsc{Paxos($p.participants$, $p.incr\_parts$)} \Comment{Synchronize logs via Paxos}
    \EndFor
\EndProcedure
\end{algorithmic}
\end{algorithm}

\begin{example} \label{example:transfer_2pc}
    As shown in Figure~\ref{fig:transfer_2pc} (top left), initially the transaction is in RUNNING state with \textit{participants} A, B, and C (blue). During RUNNING, three partition transfers occur: A→D, B→E, and D→F. Consequently, D is in A's \textit{interm\_parts}, E is in B's \textit{interm\_parts}, and F is in D's \textit{interm\_parts} (green). When the transaction enters PREPARE (top right), \textit{interm\_parts} become \textit{incr\_parts} and join \textit{participants} in the current stage. Two transfers during PREPARE add K and G (green) as \textit{interm\_parts}. At COMMIT (bottom), all nodes in the log stream tree participate in the commit.
\end{example}

\subsection{Transfer and Log Commit Concurrency}
\label{sec:transfer_con_transfer_commit}

The atomic assumption in Section~\ref{sec:transfer:con_transfer_transaction} is idealized; in practice, partition transfer and 2PC log commits cannot be made atomic. For instance, when a transaction receives a PREPARE request and merges \textit{interm\_parts} into \textit{incr\_parts}, a concurrent partition transfer may add a destination log stream to \textit{interm\_parts} and commit the transfer-out log before the prepare log commits. This causes the prepare log to miss the transfer destination, violating Theorem~\ref{thm:transfer_principle}.

To ensure correctness, two requirements must be satisfied:
\begin{itemize}
    \item Partition transfer must migrate all transaction context written before the transfer-out log to the destination.
    \item Transactions must include all previous transfer destination log streams in \textit{incr\_parts}.
\end{itemize}

The first requirement is handled by the transfer mechanism. For the second, the following operations must be atomic (the shared transfer lock avoids races:
\begin{itemize}
    \item Partition transfer broadcasts the destination log stream to all transactions and commits the transfer-out log.
    \item Participants merge \textit{interm\_parts} into \textit{incr\_parts} and commit the 2PC log.
\end{itemize}

Algorithms~\ref{alg:partition_transfer} and~\ref{alg:log_commit} present the detailed partition transfer and log commit procedures.

\begin{algorithm} 
\caption{Partition Transfer Procedure}
\label{alg:partition_transfer}
\begin{algorithmic}[1]
\Require
   Transfer service $s$, transfer-out log $\mathit{log}$ (with tablet $\mathit{tab}$, destination log stream $\mathit{dst}$, timestamp $\mathit{ts}$, source log stream $\mathit{src}$), transfer lock $\mathit{lock}$, transaction contexts set $tc$
\State $s$ acquires $\mathit{lock}$
\For{transaction $t \in tc$}
    \If{$\mathit{tab}$ is involved in $t$}
        \State $t$ acquires $\mathit{lock}_t$
        \State Block $t$ from writing new logs
        \State Collect $t$'s transaction context into migration set (for destination log)
        \State $t$ releases $\mathit{lock}_t$
    \EndIf
\EndFor
\State $s$ commits log on source log stream: tablet $\mathit{tab}$, destination $\mathit{dst}$, timestamp $\mathit{ts}$
\State $s$ commits log on destination log stream: tablet $\mathit{tab}$, source $\mathit{src}$, transaction information set
\For{transaction $t \in tc$}
    \If{$\mathit{tab}$ is involved in $t$ \textbf{and} $\mathit{ts} > t.\mathit{ts}'$}
        \State $t$ acquires $\mathit{lock}_t$
        \State Add $\mathit{dst}$ into $t.\mathit{interm\_parts}$
        \State Unblock $t$ from writing logs
        \State $t$ releases $\mathit{lock}_t$
    \EndIf
\EndFor
\State $s$ releases $\mathit{lock}$
\State Clear the information on $s$
\end{algorithmic}
\end{algorithm}

\begin{algorithm} 
\caption{Log Commit Procedure}
\label{alg:log_commit}
\begin{algorithmic}[1]
\Require
   Transfer service $s$, transfer lock $\mathit{lock}$, transaction contexts set $tc$
\State $s$ acquires $\mathit{lock}$
\For{transaction $t \in tc$}
    \If{$t$ is not stopped from writing log}
        \State $t$ acquires $\mathit{lock}_t$
        \State Merge $t.\mathit{interm\_parts}$ into $t.\mathit{incr\_parts}$
        \State Commit logs containing $t.\mathit{participants}$ and $t.\mathit{incr\_parts}$ via Paxos
        \State $t$ releases $\mathit{lock}_t$
    \EndIf
\EndFor
\State $s$ releases $\mathit{lock}$
\end{algorithmic}
\end{algorithm}

\begin{theorem}\label{thm:transfer_correctness}
    Algorithms~\ref{alg:partition_transfer} and~\ref{alg:log_commit} ensure that the two requirements above are satisfied, and thus satisfy Theorem~\ref{thm:transfer_principle}.
\end{theorem}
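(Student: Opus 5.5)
We argue by a serialization (linearization) argument over the shared transfer lock $\mathit{lock}$. Algorithm~\ref{alg:partition_transfer} holds $\mathit{lock}$ from its first line until after the second loop. Algorithm~\ref{alg:log_commit} holds the same $\mathit{lock}$ for its whole body. Hence any execution of a transfer and any 2PC log commit on the same source log stream are totally ordered: either the log commit completes entirely before the transfer acquires $\mathit{lock}$, or it begins only after the transfer releases $\mathit{lock}$. This ordering also lets us fix a position for the transfer-out log relative to each 2PC log. We then prove the two requirements separately, case-splitting on this order for a fixed transaction $t$ that involves tablet $\mathit{tab}$.

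For the first requirement, consider any 2PC log of $t$ committed before the transfer-out log. By the serialization, it was committed by a Log Commit execution that finished before the transfer acquired $\mathit{lock}$. Its effect is therefore already in $t$'s context when the first loop acquires $\mathit{lock}_t$ and collects the context into the migration set. The blocking step (``Block $t$ from writing new logs'') ensures no further 2PC log of $t$ can be written between this collection and the commit of the transfer-out and destination logs. The guard in Algorithm~\ref{alg:log_commit} skips stopped transactions, and in any case $\mathit{lock}$ is held. So the migrated context, written in the destination log, contains exactly every log preceding the transfer-out log. Through the equivalence stated in Theorem~\ref{thm:transfer_principle}, this also covers messages sent before transfer completion.

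For the second requirement, consider any 2PC log of $t$ committed after the transfer-out log. Its Log Commit execution began after the transfer released $\mathit{lock}$, and hence after the second loop added $\mathit{dst}$ to $t.\mathit{interm\_parts}$ under $\mathit{lock}_t$. The Log Commit then merges $\mathit{interm\_parts}$ into $\mathit{incr\_parts}$ before calling Paxos, so the committed log lists $\mathit{dst}$. Since entries of $\mathit{incr\_parts}$ are never removed, every subsequent log of $t$ also includes $\mathit{dst}$. Combining the two cases, every log is either migrated or accounts for the transfer, which is exactly the log form of Theorem~\ref{thm:transfer_principle}. The message form follows by the stated equivalence.

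The main obstacle is the timestamp guard $\mathit{ts} > t.\mathit{ts}'$ in the second loop. If the guard fails, $\mathit{dst}$ is not added to $\mathit{interm\_parts}$, and the second requirement must hold by another route. I would interpret $t.\mathit{ts}'$ as the point at which $t$'s context is already fully captured, for example when $t$ has already committed its final decision. I would then argue that in this case no later 2PC log of $t$ needs $\mathit{dst}$: the migrated context already carries the final state to the destination, so the first requirement alone suffices. I would also check that multiple concurrent transfers of different tablets are handled. This follows because each holds $\mathit{lock}$ exclusively, so the argument applies transfer by transfer.
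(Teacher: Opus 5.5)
Your proposal is correct and follows essentially the paper's route. Both arguments serialize each transfer against each log-commit execution via the shared transfer lock; the paper's Cases~A/B, with Case~C ruled out by the lock, are your before/after split relative to the transfer-out log. Both derive Requirement~1 from the collect-and-block step that precedes the transfer-out log, and Requirement~2 from the fact that any log committed after the transfer-out log merges $\mathit{interm\_parts}$ only after $\mathit{dst}$ has been added to it.

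On the guard $\mathit{ts} > t.\mathit{ts}'$, the paper does not argue around it. It simply restricts its Requirement~2 argument to transfers satisfying the guard, so your speculative reading of $t.\mathit{ts}'$ is not needed. As written, the second loop of Algorithm~\ref{alg:partition_transfer} unblocks only transactions that pass the guard, so a failing $t$ stays stopped and no later 2PC log of it is committed through Algorithm~\ref{alg:log_commit}.
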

The proof is given in Appendix~\ref{sec:appendix:transfer_correctness}.

\subsection{Exception Handling of Transfer}
\label{sec:transfer:exception}

Transfer must correctly handle participant lists in failure scenarios to ensure correctness. When migration succeeds but the source log stream fails, the correct participant list must be maintained. Transfer is itself implemented using transactions; therefore, in exception scenarios all transfer operations are rolled back, preserving consistency. Transaction state and partition state are updated atomically at commit time through an atomic replacement of the relevant state, so that either the full transfer outcome is visible or none of it is. Since transfer synchronizes executing transaction data to the destination log stream (Figure~\ref{fig:5}), we handle participant list maintenance through two mechanisms:

\textit{Migration-based transfer.} Transactions migrate via transaction logs or persistent transaction tables. During persistence, we update the transaction participant list, similar to master switching.

\textit{Double-write transfer.} We apply transfer logs to update participant lists. When a transaction creates a context on a log stream, it joins the participant list. By ensuring double-write completion (persisting all transaction logs in corresponding partitions before writing the transfer-out log), we can write transfer-out logs for transactions that survive in memory during failure recovery.

\begin{figure}
  \centering
  \includegraphics[width=\linewidth]{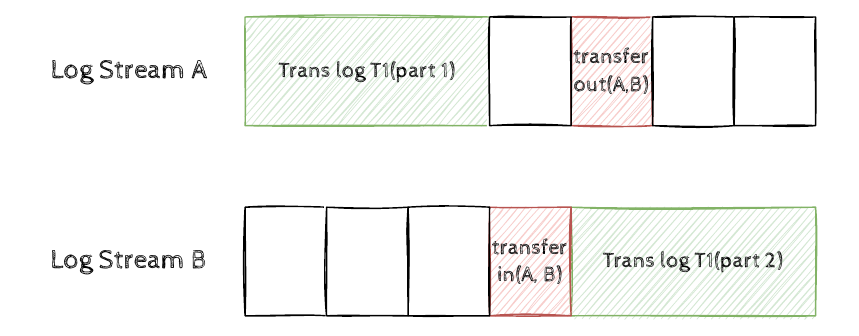}
  \caption{Log stream tree during execution.}
  \label{fig:5}
\end{figure}

\subsection{Transaction Dependence on Partitions}
\label{sec:transfer:dependence}

Tree-based 2PC relies on transfer to maintain correct participant lists (log stream trees containing the minimum set). The transfer process must update participant lists for transactions containing transferred partitions on the source side, requiring transactions to record included partitions. To minimize partition dependence, we set an upper limit on recorded partitions. When this limit is exceeded, transfers add destinations to the transaction's participant list. Since any participant list containing the minimum set is correct (Theorem~\ref{thm:log_stream_tree}), extra participants do not affect correctness.


\subsection{Circular Transactions}
\label{sec:transfer:circular}

Transactions may transfer across log streams and loop back to the original stream, requiring circular transaction handling \cite{alkhatib2002transaction, kolltveit2007circular}. When a participant receives multiple PREPARE messages from different senders, it does not immediately initiate handle\_2pc\_prepare\_request. Instead, it waits for its own log synchronization to complete (all nodes write logs per the state machine) but does not wait for downstream logs. It directly replies to its parent with PREPARE\_OK (downstream logs are managed by the same log stream in the parent to mitigate circular pipeline conflicts on linear commit progression). The modified state machine is:

\noindent$\bullet\ $ handle\_2pc\_prepare\_request($p$)
\begin{itemize}
    \item \textbf{Enable:} State is RUNNING and a 2PC\_PREPARE request is received from parent participant $p$, or a TXN\_COMMIT request is received from the scheduler.
    \item \textbf{Action:} If this is the first parent message, asynchronously write a prepare log, send 2PC\_PREPARE requests to all participants (including participant list and node status in logs and messages), and transition to PREPARE. Otherwise, wait for prepare log synchronization, send PREPARE response to parent $p$ (including participant list and node state), and transition to PREPARE.
\end{itemize}

\textbf{Liveness.} The modified rule preserves liveness and avoids deadlock despite circular topologies. Each node waits only for \emph{its own} prepare log to synchronize before replying PREPARE\_OK to its parent; it does not wait for any downstream or sibling node. Thus no node ever waits on another node that (directly or via the tree) waits on it, so there is no circular wait. Under the usual fairness assumptions (log synchronization completes in finite time and messages are eventually delivered), every participant that has received a PREPARE request will eventually complete its local prepare log sync and send PREPARE\_OK to its parent. Progress therefore flows from leaves toward the root, and the root can collect all votes and drive the transaction to COMMIT or ABORT in a finite number of steps. Hence the protocol eventually reaches a terminal state (all committed or all aborted), preserving the liveness guarantee of tree-shaped 2PC (cf.\ Section~\ref{sec:discussion:tla:liveness}).

\subsection{Achieving Design Goals}
\label{sec:transfer:goals}

This section demonstrates how we fulfill the two design goals from Section~\ref{sec:intro}: high performance for transactions without transfers, and simplicity for transactions with transfers.

\textit{High performance.} When a transaction commits with log streams that have no embedded participant lists, 2PC proceeds without partition awareness. With a log stream tree of height 1, latency and resource consumption match optimized 2PC at the same participant count, with improvement from using lightweight log streams as atomic units instead of partitions. With only one log stream and no participant list, a transaction simplifies to one-phase commit, achieving single-machine performance.

\textit{Simplicity.} Tree-based 2PC, as a classic protocol variant, requires minimal code changes by enabling intermediate nodes to act as both coordinators (downstream) and participants (upstream). In failure recovery, participant list immutability within a phase ensures straightforward transfer handling during commit, with critical logic centralized on managing transfer and transaction log ordering and processing, guaranteeing simplicity and robustness.

\section{``Unknown'' Transaction States}
\label{sec:unknown}

This section first describes the lying participants problem, then presents the unknown state mechanism to address it, and finally discusses how to eliminate the need for unknown states.

\subsection{The Lying Participants Problem}
\label{sec:unknown:lying}

In the tree-shaped 2PC protocol presented in Section~\ref{sec:tree_2pc}, participants (log streams) in the tree may ``lie'' to their parent coordinators when transaction contexts are lost. This problem can occur in any hierarchical 2PC structure, including our tree-shaped protocol. Example~\ref{example:lie} elaborates this scenario.

\begin{figure}
  \centering
  \includegraphics[width=\linewidth]{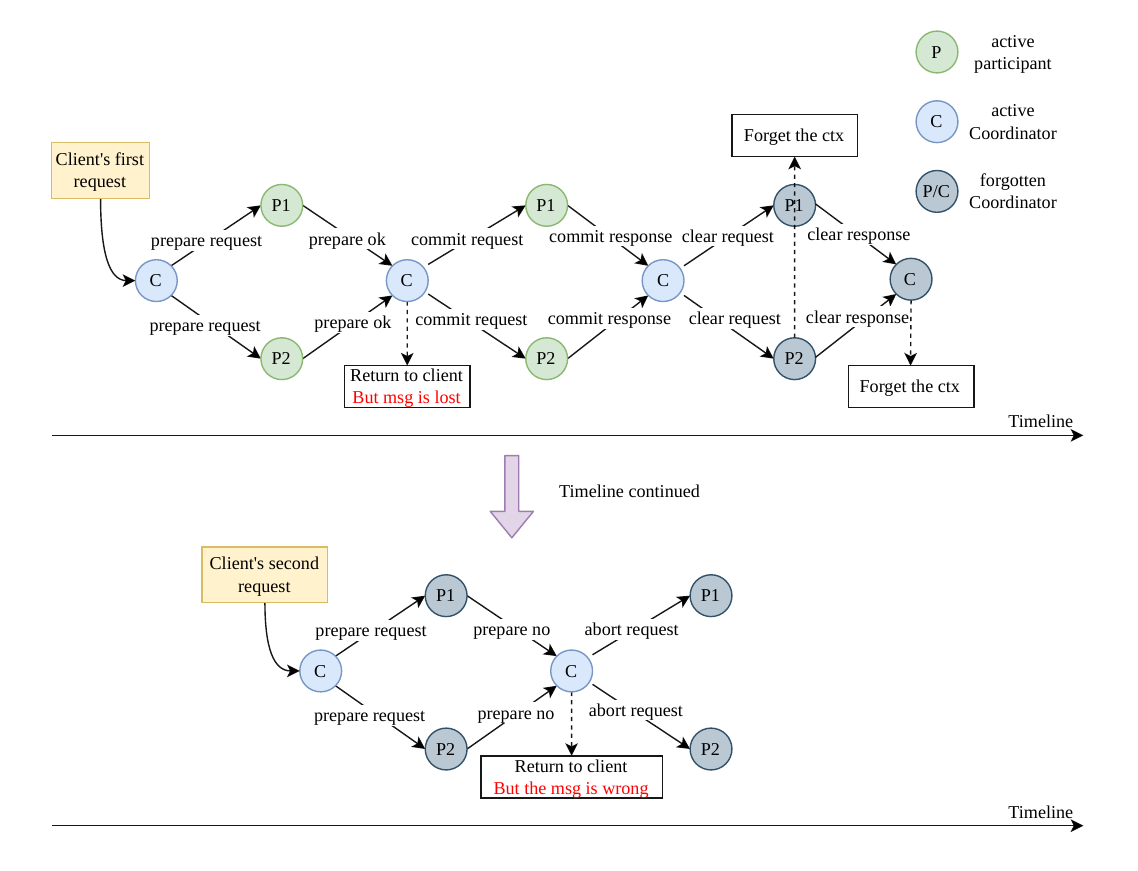}
  \caption{Issues caused by lying participants in tree-shaped 2PC.}
  \label{fig:3.3.3.1}
\end{figure}

\begin{example} \label{example:lie}
    As shown in Figure~\ref{fig:3.3.3.1} (top): Consider a tree-shaped 2PC structure with a root coordinator C and two participant log streams P1 and P2. After the coordinator C receives a COMMIT request and the tree-shaped 2PC protocol correctly transitions all participants (P1, P2) to the commit state, the transaction logs are reclaimed (note that the transaction context here refers only to the indices of the logs). However, the final COMMIT response to the user is lost. In this case, the user retries the transaction by re-sending the COMMIT request to re-establish the coordinator C's context. The tree-shaped 2PC process restarts (as shown in Figure~\ref{fig:3.3.3.1}, bottom). Since the participants' contexts have already been reclaimed, they can only respond with PREPARE\_NO. This forces the tree-shaped 2PC protocol to enter a rollback phase, ultimately returning an ABORT response to the user. Clearly, participants lied in their response to the second PREPARE request, leading the user to an incorrect state and violating Consistency (we treat the coordinator and user as special types of participants).
\end{example}

\subsection{prepare\_unknown and trans\_unknown}
\label{sec:unknown:states}


\begin{figure}
  \centering
  \includegraphics[width=\linewidth]{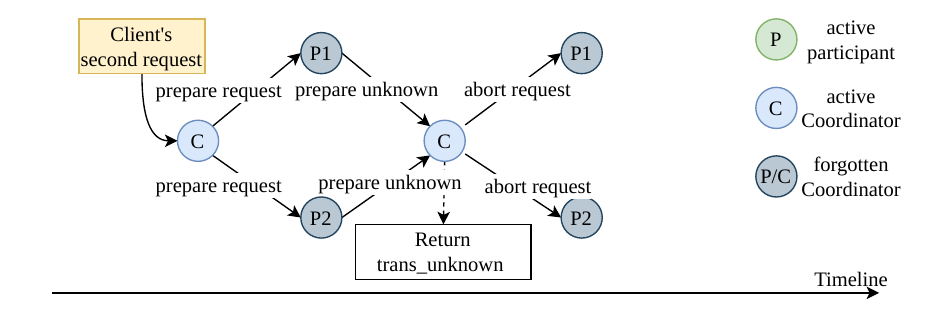}
  \caption{Resolving participant lying in tree-shaped 2PC with ``Unknown'' states.}
  \label{fig:3.3.3.2}
\end{figure}

To address the problem described above, prepare\_unknown is introduced in the tree-shaped 2PC state machine. When a participant (log stream) in the tree receives a PREPARE request from its parent coordinator but no transaction context exists (due to either normal commit cleanup or abnormal rollback), it responds with prepare\_unknown to its parent coordinator. This indicates that the participant no longer remembers its previous decision (e.g., whether it voted YES or NO in a prior phase).
Upon receiving prepare\_unknown, the coordinator proceeds by sending an ABORT request to all participants in its subtree, effectively terminating the transaction. 

For user responses, the logic becomes nuanced: if the root coordinator context is newly created and encounters prepare\_unknown, the system returns ABORT; if the root coordinator is recreated (e.g., after a lost response) and detects prepare\_unknown, it returns trans\_unknown. Revisiting the scenario in Figure~\ref{fig:3.3.3.1} (as shown in Figure~\ref{fig:3.3.3.2}), when the user retries the transaction, the participants in the log stream tree, having lost their context, respond with prepare\_unknown, prompting the root coordinator to return trans\_unknown. This avoids inconsistencies in the tree-shaped 2PC process and ensures consistency. Notably, the second coordinator is treated as distinct from the original—it never proceeds to commit, which does not violate consistency. However, the user must never observe inconsistency, so unknown states are used to handle such ambiguous scenarios.

\subsection{Strip ``Unknown'' States}
\label{sec:unknown:strip}

Although prepare\_unknown and trans\_unknown solve the problem of coordinators or participants lying to their callers, they significantly increase the complexity of tree-shaped 2PC. A log stream node may act as both a cascaded coordinator (for its children) and a participant (for its parent), so the unknown-state mechanism adds complexity to the state machine. We aim to reduce the scenarios where ``unknown'' states must be exposed to callers.

To avoid complex handling logic and to strip unknown states in common cases, we introduce a \emph{Transaction Data Table} (TDT) as in MaLT~\cite{malt_sigmod2025}. The TDT stores the decided state (committed or aborted) of every transaction that has reached a terminal state on this log stream. When a participant has lost its transaction context (e.g., after cleanup or recovery) and subsequently receives an inquiry message (e.g., a duplicate or retried PREPARE request), it first checks the TDT. If the transaction is found and already decided, the participant responds with the actual decision (e.g., PREPARE\_OK with the committed/aborted outcome) instead of replying prepare\_unknown. The coordinator can then proceed normally without entering the unknown-state path or returning trans\_unknown to the user. Transaction state in the TDT is retained for a configurable period (e.g., 30 minutes by default), which satisfies most duplicate or late messages (e.g., retries and delayed delivery). Thus, in the common case where the transaction has already committed or aborted and the inquirer is a duplicate or replayed message, the participant can resolve the inquiry from the TDT and avoid exposing ``unknown'' states, simplifying the state machine and reducing the need for user-facing trans\_unknown handling.

\section{Discussion}
\label{sec:discussion}

This section discusses several important aspects of our tree-shaped 2PC framework. We first present optimizations to improve transaction performance, including row lock release latency reduction and CLEAR log minimization. We then provide formal verification of the protocol's safety and liveness properties using TLA+. Finally, we discuss integration considerations with existing OceanBase modules, including liboblog and backup recovery mechanisms.

\subsection{Optimization}
\label{sec:discussion:opt}

\subsubsection{Release Messages}
\label{sec:discussion:opt:release}

In our tree-shaped 2PC implementation, we observed that while transaction latency remains unchanged and resource consumption is reduced, the latency for resource release has increased by the duration of one log operation. Among these resources, row locks and logs are two critical components~\cite{ryu1990analysis, gray2005notes, barthels2019strong}: row locks represent a transaction's contention footprint, while logs govern log reclamation. While log reclamation latency is inherently tolerable and can be mitigated through delayed batch processing, row lock contention is intolerable: higher contention footprints directly degrade transaction throughput.

To address this issue, we propose an optimization to reduce row lock release latency. The coordinator's synchronous write of the COMMIT log ensures participants can safely release their log resources by querying the coordinator's transaction status if other participants fail, guaranteeing no residual dependencies between participants. However, row lock release does not require this synchronous log operation. Once any participant confirms the transaction's successful commit, row locks can be safely released.

We decouple row lock release and log reclamation using two distinct messages. After the coordinator enters the commit point (i.e., receiving all PREPARE acknowledgments), it concurrently initiates RELEASE messages (for row locks) and synchronizes the COMMIT log. Only after COMMIT log synchronization succeeds does it send COMMIT messages (triggering log resource cleanup). This maintains log reclamation latency at two log synchronizations while reducing row lock release latency by replacing one log synchronization with an additional message roundtrip.

The following summarizes the optimized tree-shaped 2PC algorithm (assuming successful commit with N participants):
\begin{itemize}
    \item \textbf{Transaction latency}: 2 message roundtrips (coordinator's PREPARE request, participants' PREPARE responses) and 1 log synchronization (participants' PREPARE logs). 
    \item \textbf{Row lock release latency}: 3 message roundtrips (coordinator's PREPARE, RELEASE requests; participants' PREPARE responses) and 1 log synchronization (participants' PREPARE logs).
    \item \textbf{Log reclamation latency}: 3 message roundtrips (coordinator's PREPARE, COMMIT requests; participants' PREPARE responses) and 2 log synchronizations (coordinator's COMMIT log, participants' PREPARE and COMMIT logs).
    \item \textbf{Transaction resource consumption}: 5N message transmissions (comprising the coordinator's PREPARE, RELEASE, and COMMIT requests, along with participants' PREPARE and COMMIT responses), 2N+1 synchronously replicated majority-acknowledged logs (the coordinator's COMMIT log and participants' PREPARE and COMMIT logs), and 1 asynchronously replicated majority-acknowledged log (the coordinator's CLEAR log).
\end{itemize}

\subsubsection{Optimizing CLEAR Logs via D2PC}
\label{sec:discussion:optimization:D2PC}

\begin{figure}
  \centering
  \includegraphics[width=\linewidth]{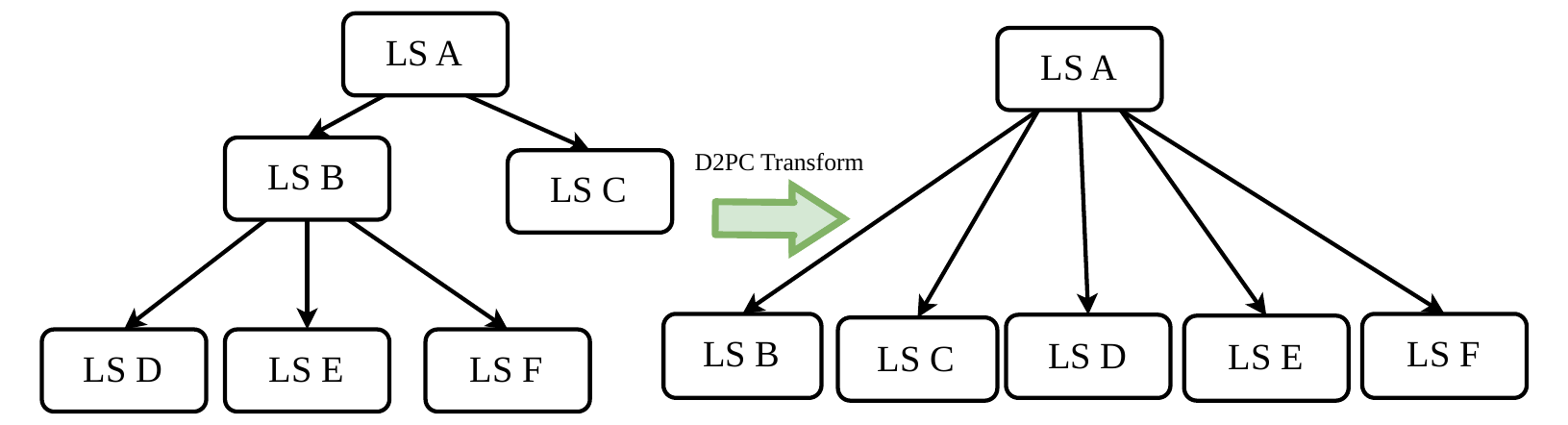}
  \caption{D2PC optimization.}
  \label{fig:7}
\end{figure}

We observe that only leaf nodes do not require writing CLEAR logs, which may introduce inefficiencies under our optimization framework. Notably, the participant lists in the PREPARE and COMMIT phases can differ due to partition transfers. To address the excessive CLEAR logs caused by frequent partition transfers during the two-phase commit process, we leverage the D2PC (Dynamic Decentralized Two-Phase Commit) method~\cite{ezhilchelvan2018non}. As illustrated in Figure~\ref{fig:7}, this approach ensures that only the root log stream (Log Stream A, i.e., LS A) needs to write CLEAR logs. By dynamically adjusting the participant list between phases, redundant logging is minimized while preserving atomicity and consistency guarantees.

\subsection{TLA+ Proof}
\label{sec:discussion:tla+}

In this section, we prove the safety and liveness properties of the proposed tree-shaped 2PC protocol. 
The protocol is specified in TLA+ (Appendix~\ref{sec:appendix:tla_spec}, module \texttt{2pc\_tla}). 
Unlike the previous flat TM/RM abstraction, the specification models each node uniformly: 
a node may act as a coordinator toward its children and simultaneously as a participant toward its parent. 
The local state of a node is recorded in \textit{rmState}, 
and the parent–child topology is captured by the variables \textit{parent} and \textit{children}. 
Dynamic membership introduced by transfer is represented by an intermediate child set, 
which is merged into the active child set only at phase boundaries. 
All protocol actions (prepare, decision, propagation, acknowledgement, internal abort, and transfer) 
are defined in the next-state relation of the specification.

\subsubsection{Proof of Safety}
\label{sec:discussion:tla:safety}

For safety, we must ensure that all nodes always agree on the same outcome and that once a node reaches commit or abort, it never changes that decision. 

\begin{theorem}\label{thm:safety_correctness}
From the TLA+ specification, the safety property of the proposed tree-shaped 2PC protocol follows directly from the safety property of the traditional 2PC protocol.
\end{theorem}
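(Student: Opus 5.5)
The plan is a refinement argument: I will show that every behavior of the tree-shaped specification (module \texttt{2pc\_tla}) projects onto a behavior of the classical flat 2PC specification (the TM/RM model of Lamport's \texttt{TwoPhase}). I will then transfer that model's invariants, TCConsistent and decision stability, back along the projection. Concretely, I define a refinement mapping $\bar{\cdot}$ that sends the root's state to the TM state. It sends every node to an RM with the corresponding \textit{rmState}. The set of received prepare votes is mapped to the set of nodes whose entire subtree has reported PREPARE\_OK, so a node counts as ``prepared'' in the flat model only once it and all its descendants are prepared. The safety property itself is the conjunction of two conditions. The first is that no two nodes are in COMMIT and ABORT respectively. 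The second is a stability action property $\Box[\,\mathit{rmState}[n]\in\{\mathrm{COMMIT},\mathrm{ABORT}\}\Rightarrow \mathit{rmState}'[n]=\mathit{rmState}[n]\,]$.

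The key steps, in order, are as follows. First, I prove that $\mathit{Init}$ maps to the flat $\mathit{Init}$, which is immediate since all nodes start RUNNING with empty collected lists. Second, I go through each disjunct of the next-state relation and show it maps to either a flat action or a stuttering step. A node's local prepare maps to RMPrepare. A non-root node forwarding PREPARE\_OK after collecting all children maps to stuttering, because it is only bookkeeping of votes. The root's commit on a full collected list maps to TMCommit. Propagation of COMMIT/ABORT down the tree maps to RMRcvCommitMsg/RMRcvAbortMsg. Internal abort of a RUNNING node maps to RMChooseToAbort. Acknowledgements and the TOMBSTONE transition map to stuttering. Third, I conclude that the flat theorem $\mathit{TPSpec}\Rightarrow\Box\,\mathit{TCConsistent}$ yields consistency under $\bar{\cdot}$, and hence for the tree. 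Decision stability follows the same way, since no tree action leaves COMMIT or ABORT.

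The main obstacle is the \emph{transfer} action, together with the intermediate child set that is merged only at phase boundaries. Under transfer, the node set of the flat model is not fixed. My approach is to take the flat RM set to be the (finite) universe of all nodes that may ever appear. A node not yet attached is treated as an RM still in ``working''. For the root's commit step to refine TMCommit, I need the invariant that, at the moment the root decides, every node reachable through \textit{children} has been included in a PREPARE wave. I will discharge this with an inductive invariant. It states that a node with nonempty interm\_parts cannot be in PREPARE-reported status toward its parent until the merge has happened. This invariant is justified by the atomicity assumption of Theorem~\ref{thm:transfer_correctness} and the transfer principle (Theorem~\ref{thm:transfer_principle}). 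A transfer occurring after the root's commit only adds a node that inherits the committed state via \textsc{migrateData}, which maps to RMRcvCommitMsg. If the invariant proves too weak for TLC/TLAPS, my fallback is to strengthen it with the minimum-set statement of Theorem~\ref{thm:log_stream_tree}.
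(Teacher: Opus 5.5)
Your strategy is the paper's (simulate the tree spec by flat 2PC and pull back its invariant), but the step where you handle transfer fails against the TLA+ module, and the failure is not merely technical.

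In the spec, \texttt{InternalAbort} is enabled for every RUNNING node, with no requirement that it belong to the commit tree. \texttt{AddIntermediateParticipant} changes only \texttt{intermediate\_children}, and the new child inherits no state. There is no counterpart of \textsc{migrateData} or of the lock discipline behind Theorem~\ref{thm:transfer\_correctness}, so you cannot import those results ``from the TLA+ specification.'' Your proposed invariant is also false. \texttt{AllVotesOk} ranges only over \texttt{children}, a non-root node sends PREPARE\_OK without merging, and \texttt{AddIntermediateParticipant} is enabled in PREPARE and COMMIT. Hence a node can report OK while its intermediate set is nonempty, which is exactly the paper's deferred-to-next-phase design.

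Consequently, with the flat RM set equal to all of \textit{Node} and unattached nodes mapped to working, the root's commit step violates the guard $\mathit{tmPrepared}=\mathit{RM}$ of \textit{TMCommit}. No mapping repairs this. Take $\mathit{Node}=\{r,y,x\}$ and $\mathit{InitChildren}=[r\mapsto\{y\},\,y\mapsto\{\},\,x\mapsto\{\}]$. The steps \texttt{InternalAbort}$(x)$, \texttt{RootStartToCommit}, \texttt{Handle2pcPrepareRequest}$(y)$, \texttt{Handle2pcCommitDecided}$(y)$, \texttt{Handle2pcPrepareResponse}$(r)$, \texttt{Handle2pcCommitDecided}$(r)$ put $r$ in COMMIT and $x$ in ABORT, violating \textit{Consistency}. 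Similarly, a RUNNING node attached to a parent that has already voted is never sent PrepareReq, so it can abort before receiving Commit, or instead of it if the parent is already in COMMIT.

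What is missing is an explicit hypothesis, for example that every node is reachable from \textit{Root} under \textit{InitChildren}, or else a weakening of \textit{Consistency} to nodes that actually joined the tree. Under that hypothesis, the lemma you need is that when the root commits, every node has already executed \texttt{Handle2pcCommitDecided}. This follows by induction along \textit{InitChildren} paths, because an OK sent to a node's recorded parent can only come from \texttt{Handle2pcCommitDecided}; \texttt{Handle2pcDuplicatePrepareRequest} replies only to non-parents.

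Several individual step mappings also fail as stated.
\begin{itemize}
\item Abort messages do not come only from the root. \texttt{InternalAbort} of a non-root RUNNING node sends Abort to its children, and \texttt{Handle2pcAbortRequest} accepts Abort from any sender, even when the receiver is in PREPARE and has already voted OK to a different parent in the DAG. This happens while the root is still in PREPARE, so it cannot be \textit{RMRcvAbortMsg}: no flat Abort message exists before \textit{TMAbort}.
\item Your stability clause is false. \texttt{ForgetCtx} moves COMMIT and ABORT to TOMBSTONE, and TOMBSTONE forgets the outcome; it even answers PrepareReq with NO. Mapping to ``the corresponding rmState'' therefore needs a history variable, and stability must be stated as never crossing between COMMIT and ABORT.
\item The root's commit changes both the TM state and the root's own RM state in one step, which is two flat steps.
\item Under your definition of $\mathit{tmPrepared}$, OK-forwarding by a non-root node is not a stutter.
\item Because of the duplicate-prepare rule, ``entire subtree'' must mean the recorded-parent tree. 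With the \texttt{children} DAG, one step can add several nodes to $\mathit{tmPrepared}$.
\end{itemize}

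Most of this bookkeeping disappears if you refine to \textit{TCommit} rather than \textit{TwoPhase}. Then each COMMIT transition needs every node in PREPARE or COMMIT, which is the lemma above, and each ABORT transition needs no node committed. The paper's argument, which collapses each decided subtree into one flat participant, does not address any of these points, including the reachability assumption.
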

The proof is given in Appendix~\ref{sec:appendix:safety}.

\subsubsection{Proof of Liveness}
\label{sec:discussion:tla:liveness}

For liveness, we require that after the root initiates the protocol, all nodes eventually reach a terminal state (commit or abort) within a finite number of transitions.

\begin{theorem}\label{thm:liveness_correctness}
From the TLA+ specification, the liveness of the proposed tree-shaped 2PC protocol follows from that of the traditional 2PC protocol under standard fairness assumptions.
\end{theorem}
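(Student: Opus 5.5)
We prove liveness by reducing the tree-shaped protocol to a finite composition of traditional (flat) 2PC instances, one per internal node, and then arguing termination by well-founded induction on the tree. The fairness assumptions are those of the TLA+ specification: weak fairness on every enabled protocol action (prepare, decision, propagation, acknowledgement, internal abort), finite-time log synchronization, and eventual message delivery. We also assume the number of transfer actions for a transaction is finite, so the set of log streams ever created is finite.

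\textbf{Step 1: the tree is finite and acyclic in the relevant sense.} Each transfer adds exactly one destination to some node's \textit{interm\_parts}. By the finiteness assumption, the log stream tree of Theorem~\ref{thm:log_stream_tree} has finitely many nodes. \textit{interm\_parts} are merged into the active child set only at phase boundaries (Algorithm~\ref{alg:log_commit}), so the child set a node waits on is fixed within a phase. Circular topologies need separate treatment. By the modified \texttt{handle\_2pc\_prepare\_request} of Section~\ref{sec:transfer:circular}, a node that receives a non-first PREPARE waits only on its own log before replying. Hence the wait-for relation among nodes follows only first-parent edges, which form a spanning tree of finite height $H$.

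\textbf{Step 2: prepare phase terminates, by induction on height.} For a leaf, the PREPARE request is eventually delivered. The prepare log then synchronizes in finite time, and weak fairness forces the PREPARE response. For an internal node $v$, assume every child eventually replies (OK, NO, or prepare\_unknown). Then $v$'s \textit{collected participant list} eventually equals its fixed child set. This is exactly the vote-collection step of flat 2PC with $v$ as transaction manager, whose liveness is the standard result, so $v$ eventually replies to its parent. At the root, the same argument shows that a decision (COMMIT, or ABORT from any NO/unknown vote or an internal-abort timeout) is eventually taken.

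\textbf{Step 3: decision dissemination terminates, by top-down induction.} A node in PREPARE that receives the decision writes its COMMIT/ABORT log and forwards the decision to its children. Its children at this point include any \textit{incr\_parts} merged at this boundary. Transfers occurring during COMMIT only migrate state (Algorithm~\ref{alg:transfer_2pc}) and add no new waits. Each subtree therefore receives the decision after at most $H$ hops, and ACKs flow back up by the same bottom-up induction. Every node reaches COMMIT/ABORT and then TOMBSTONE. Safety (Theorem~\ref{thm:safety_correctness}) ensures no node regresses to an earlier state.

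\textbf{Main obstacle.} The hardest step is Step 1. We must show that transfers interleaved with commit cannot create an infinite sequence of newly enabled waits, and that the transfer lock of Algorithm~\ref{alg:partition_transfer} cannot block a node's log commit forever. My first attempt is a ranking function: the pair (number of pending transfers, number of non-terminal nodes weighted by depth), ordered lexicographically, which every fair step decreases. The transfer lock is released after a bounded loop, so weak fairness on the log-commit action gives eventual acquisition. Under standard fairness, if the number of transfers is unbounded, I would have to explicitly assume that transfers eventually stop.
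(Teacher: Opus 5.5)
Your proposal is correct in substance and follows essentially the paper's route. It assumes reliable delivery plus fairness, pushes prepare requests down, collects votes bottom-up at each internal node as an instance of flat 2PC, and pushes the decision back down; your first-parent spanning tree and phase-boundary merging just make explicit the cycle and transfer handling that the paper covers only by ``the system is finite.''

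Two small remarks. First, in the TLA+ model transfers are automatically bounded: \texttt{Node} is finite, and each \texttt{AddIntermediateParticipant}$(n,c)$ can fire at most once because child sets only grow, so you need neither an extra assumption nor a ranking function. Second, you should conclude only that every node reaches COMMIT or ABORT, not TOMBSTONE. In the specification the root has no handler for a Commit or Abort sent back to it around a cycle, so such a sender waits forever for its Ack; the paper's own ``acknowledgements are returned upward'' is equally loose.
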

The proof is given in Appendix~\ref{sec:appendix:liveness}.

\section{Experimental Evaluation}
\label{sec:experimental_evaluation}

This section presents an experimental evaluation of our tree-structured 2PC framework in OceanBase, with a particular focus on its behavior under partition transfer.

\subsection{Experimental Setup}
\label{sec:experimental_setup}

This subsection details the experimental environment and methodology used for our evaluation. All experiments are conducted on OceanBase version 4.4.1 (community edition) in shared-nothing (SN) deployment mode. To comprehensively assess the proposed design, we employ standard OLTP workloads including TPC-C and sysbench. Unless otherwise specified, all experiments follow the uniform deployment configurations, workload settings, and evaluation metrics described below.

\subsubsection{System Configuration}
\label{sec:system_configuration}

We deploy a cluster with up to eight OBServer nodes and a single proxy node, where all OBServer nodes have identical hardware configurations. Each OBServer node is equipped with an Intel Xeon CPU with 32 cores, 128\,GB of memory, and a 2\,TB SSD. The cluster runs on Ubuntu 24.04 and is deployed in a cloud environment on Alibaba Cloud.
Each log stream is replicated using a two-replica Paxos configuration, with replicas placed on different OBServer nodes. We evaluate both transfer-enabled and transfer-disabled settings, and apply customized transaction-related configurations where necessary. Unless otherwise stated, the same system configuration is used across all experiments.

\subsubsection{Workloads}
\label{sec:workloads}

We use the TPC-C benchmark to evaluate the impact of partition transfer on online transaction processing under high concurrency. TPC-C models a realistic multi-table OLTP workload with a high proportion of distributed transactions, making it well suited for examining how partition transfer affects transaction latency, throughput, and performance stability. In our experiments, partition transfer is periodically injected during the steady-state execution phase to emulate continuous load balancing in elastic deployment environments.
Unless otherwise specified, TPC-C is configured with 3,000 warehouses and 500 concurrent client connections. Each experiment runs for one hour, consisting of an initial warm-up phase followed by steady-state measurement. Partition transfer is triggered at a fixed interval of three minutes during the steady-state phase.

\subsubsection{Metrics}
\label{sec:metrics}

To evaluate the performance and robustness of our framework, we employ two primary categories of metrics: throughput and latency. For the TPC-C benchmark, throughput is measured in transactions per minute (\textit{tpmC}). Transaction latency is captured from the client side, where we report both the average latency and the 99th percentile (P99) latency. We specifically prioritize P99 as the key indicator of tail latency, as it effectively reflects the impact of distributed coordination overhead and potential stalls during partition transfers.
Furthermore, we assess system stability by monitoring latency fluctuations throughout the execution. Rather than merely reporting aggregate statistics, we analyze the distribution of P99 spikes during active partition transfer events. This allows us to quantify the framework's resilience and its ability to maintain predictable performance despite dynamic changes in the underlying log stream topology.

\subsection{Impact of Transfer on Online Transactions}
\label{sec:impact}
In this subsection, we evaluate the impact of partition transfer on online transaction processing performance. We focus on how transfer affects transaction latency, throughput, and performance stability under realistic high-concurrency workloads.

\subsubsection{Experimental Setup and Baselines}
\label{sec:baselines}
We use the TPC-C benchmark to evaluate the impact of partition transfer on online transaction performance. To isolate the specific effects of dynamic topology changes, we focus on two primary configurations of our framework:
\begin{itemize}
    \item \textbf{OB (Transfer Disabled):} Partition transfer is deactivated throughout the experiment to establish a baseline for peak performance.
    \item \textbf{OB-Transfer (Transfer Enabled):} Partition transfer is periodically triggered at fixed three-minute intervals during the steady-state phase, simulating load balancing.
\end{itemize}

To provide broader performance context, we compare these configurations against two external baselines: MySQL (v8.0), a standalone database, and TiDB~\cite{huang2020tidb} (v8.1), a distributed database using traditional 2PC coordination. Unless otherwise stated, workload parameters (e.g., 3,000 warehouses) remain identical across settings.

\subsubsection{Overall Performance under Transfer}
\label{sec:overall-performance-transfer}

\begin{figure}[t]
  \centering
   \includegraphics[width=\linewidth]{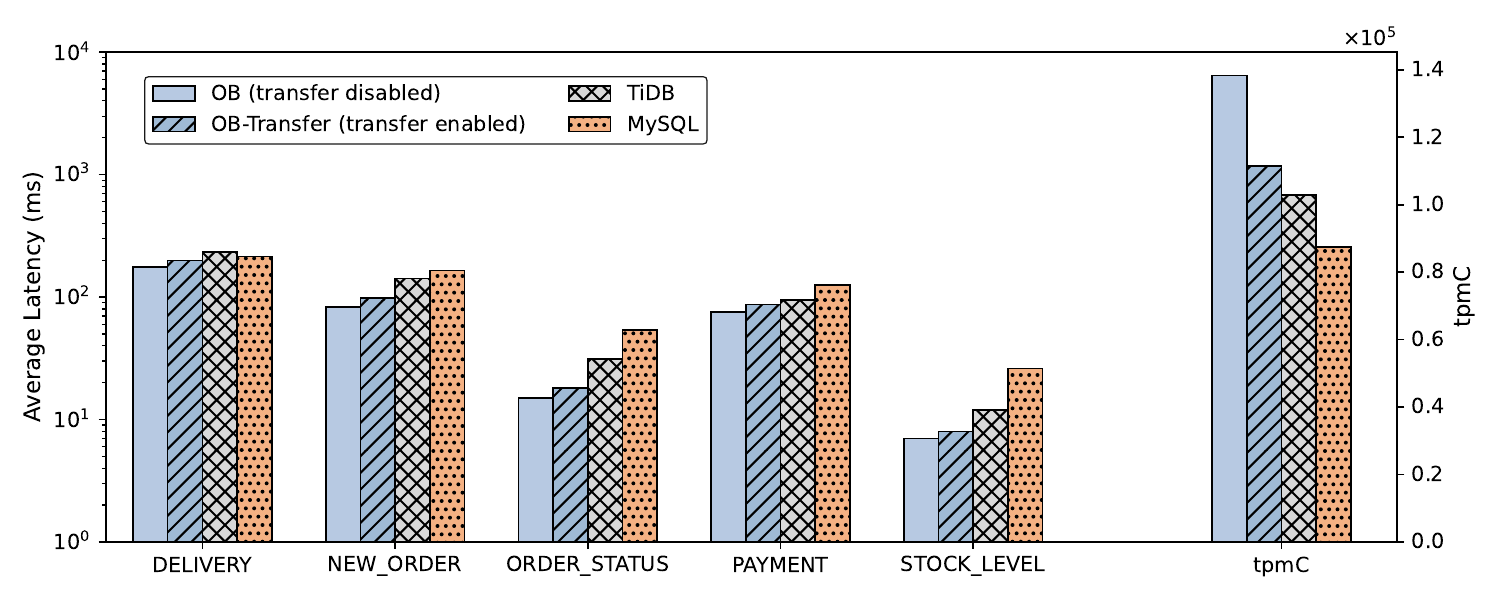}
  \caption{TPC-C latency--throughput comparison under partition transfer.}
  \label{fig:tpcc-transfer-performance}
\end{figure}

Figure~\ref{fig:tpcc-transfer-performance} presents the latency--throughput characteristics of TPC-C under partition transfer. The results reveal several key insights into the efficiency of our framework under both static and dynamic conditions.

\paragraph{Throughput and System Resilience}
As illustrated in the right panel of Figure~\ref{fig:tpcc-transfer-performance}, the baseline OceanBase (OB) achieves the highest throughput among the evaluated systems. When partition transfer is enabled (OB-Transfer), throughput decreases moderately due to the additional coordination required during log stream migration. However, the system sustains stable throughput without abrupt degradation, indicating that the transfer process is well integrated with transaction execution.

This behavior can be attributed to the tree-structured 2PC protocol, which allows the participant set of a transaction to evolve during execution without forcing transaction aborts or global synchronization barriers. By organizing transaction participants in a hierarchical structure, the protocol limits the coordination scope introduced by transfer, ensuring that the additional overhead grows in a controlled manner rather than propagating across all participants.

In contrast, TiDB exhibits lower throughput under the same workload configuration, reflecting the higher coordination overhead incurred by distributed transaction execution across multiple nodes. MySQL achieves substantially lower throughput in this high-concurrency setting. This outcome is primarily due to the saturation of single-node resources under 500 concurrent client connections, where CPU scheduling, lock management, and log serialization contend for limited shared resources. As concurrency increases, these contention points become dominant, leading to rapid throughput saturation rather than gradual scaling.

\paragraph{Transaction Latency Characteristics and Root Causes}
The left panel of Figure~\ref{fig:tpcc-transfer-performance} reports client-side latency for individual TPC-C transaction types on a logarithmic scale. OceanBase consistently achieves lower latency than TiDB across all transaction categories, indicating the effectiveness of consolidating multiple partitions within a single log stream. This Single Log Stream (SLS) optimization reduces the number of transaction participants involved in the commit process, thereby minimizing RPC invocations and cross-node coordination overhead.

A key observation is the close proximity between the OB and OB-Transfer latency bars. Even for coordination-intensive transactions such as \textit{New-Order} and \textit{Payment}, enabling transfer introduces only a limited latency increase. This is because the tree-structured commit protocol confines the impact of transfer-induced topology changes to a small portion of the commit path, preventing widespread coordination delays and preserving low end-to-end latency.

Interestingly, under the evaluated high-concurrency workload, the distributed deployment of OceanBase also provides more favorable latency profiles than the standalone MySQL baseline. By distributing client connections and transaction execution across multiple OBServer nodes, OceanBase alleviates contention on centralized resources. In contrast, MySQL processes all transactions within a single execution context, where increased concurrency amplifies lock contention and logging bottlenecks, resulting in higher latency and reduced throughput.

Overall, these results demonstrate that the proposed framework not only optimizes distributed transaction coordination in stable execution states, but also effectively absorbs the additional complexity introduced by partition transfer, enabling predictable and efficient online transaction processing in elastic environments.

\begin{figure}[t]
  \centering
   \includegraphics[width=\linewidth]{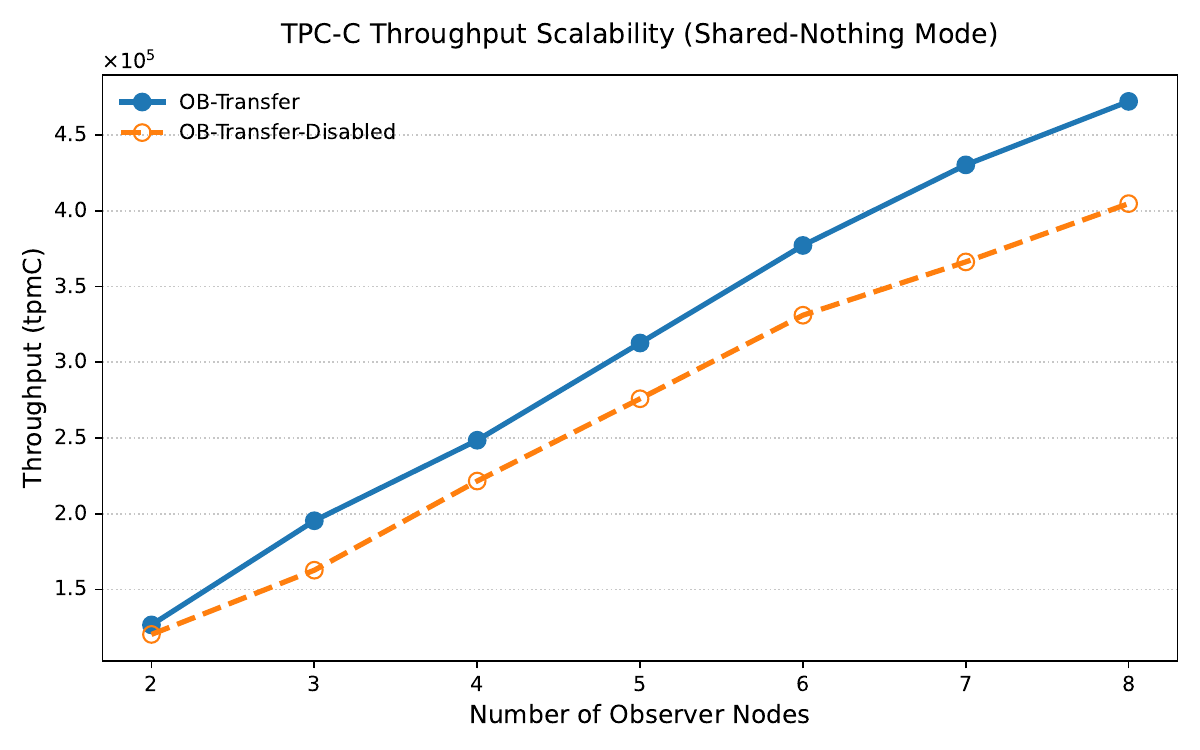}
  \caption{TPC-C throughput scalability as a function of cluster size.}
  \label{fig:tpmc-scalability}
\end{figure}

\begin{figure}[t]
  \centering
   \includegraphics[width=\linewidth]{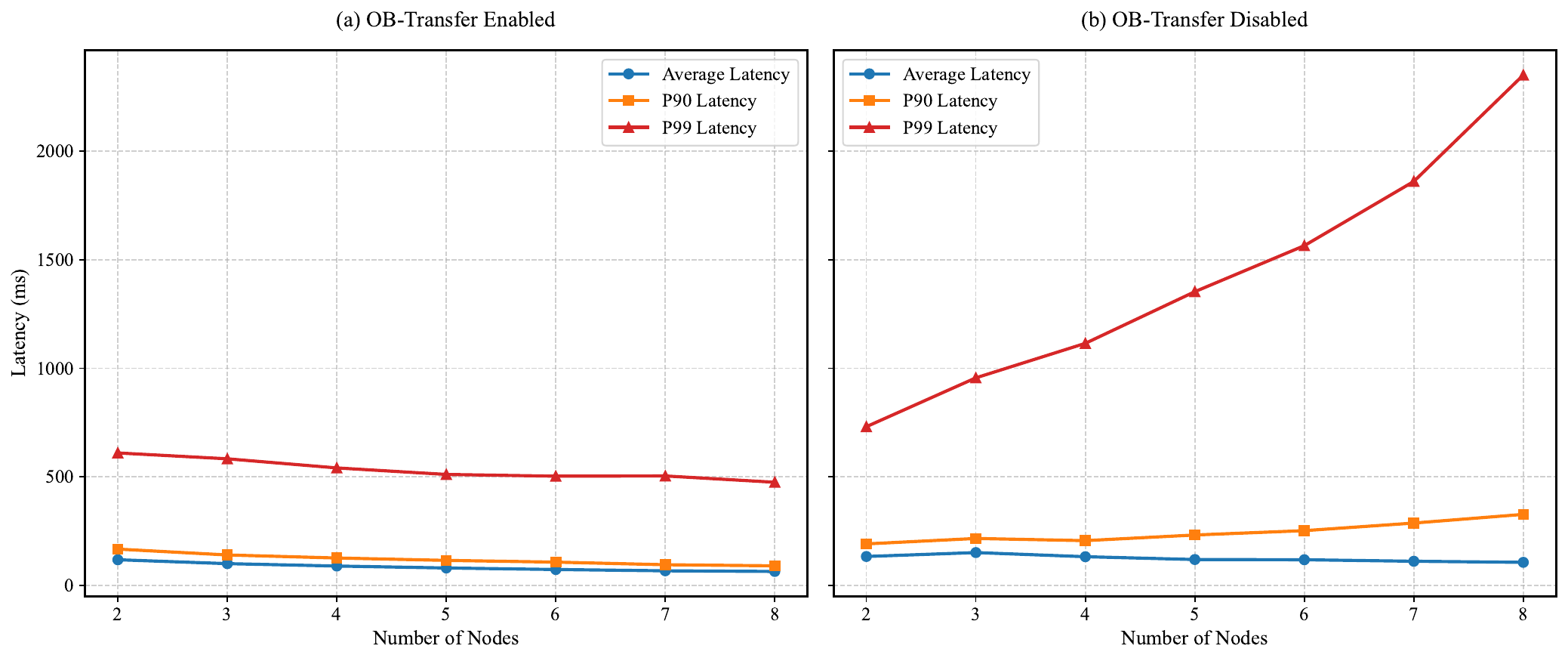}
  \caption{Transaction latency distribution across different cluster sizes.}
  \label{fig:tpmc-scalability-latency}
\end{figure}

\subsection{Scalability under Partition Transfer}
\label{sec:scalability-transfer}

This subsection evaluates the scalability of partition transfer under cluster expansion in shared-nothing (SN) mode. The primary objective is to verify whether the proposed transfer-aware transaction framework preserves high throughput and stable latency as the number of OBServer nodes increases.

\subsubsection{Experimental Methodology}
\label{sec:experimental_methodology}

To evaluate scalability, we varied the number of OBServer nodes in the OceanBase cluster while keeping the workload and system configuration unchanged from Section~\ref{sec:experimental_setup}. All experiments are conducted in SN mode with partition transfer enabled. We use the TPC-C benchmark configured with 3,000 warehouses and 500 concurrent client connections.

We scale the cluster from 2 to 8 OBServer nodes and evaluate system performance at each intermediate cluster size. For each configuration, we measure the achieved throughput in transactions per minute (tpmC) as well as client-side transaction latency. This experimental design isolates the impact of cluster expansion on transfer behavior and transaction processing performance.

\subsubsection{Throughput Scalability under Cluster Expansion}
\label{sec:throughput_scalability}

Figure~\ref{fig:tpmc-scalability} illustrates TPC-C throughput as a function of the number of OBServer nodes. The results demonstrate that throughput exhibits \textit{near-linear scalability} in both transfer-disabled and transfer-enabled configurations, indicating effective resource utilization.

Notably, while the transfer-enabled configuration shows slightly lower throughput due to the inherent coordination and logging overhead of partition migration, the performance delta between the two configurations remains constant as the cluster scales. The throughput curves exhibit parallel growth trajectories, suggesting that the overhead introduced by partition transfer is bounded and does not compound with cluster size.

This scalability is a direct result of our log-stream-based architecture and the tree-structured 2PC protocol. By aggregating co-located partitions into a single log stream participant, the protocol bounds the number of participants involved in a transaction. Furthermore, partition transfers only introduce additional participants along localized branches of the commit tree rather than expanding the global coordination scope. Consequently, scaling out the cluster increases aggregate processing capacity without proportionally increasing the per-transaction coordination cost.

\subsubsection{Latency Stability across Different Cluster Sizes}
\label{sec:latency_stability}

Figure~\ref{fig:tpmc-scalability-latency} reports the distribution of transaction latency (Average, P90, and P99) across different cluster sizes. The results indicate that transaction latency remains stable as the number of OBServer nodes increases.

Crucially, the tail latency (P99) exhibits no significant degradation during scale-out, confirming that the coordination overhead related to transfers remains bounded. Although partition transfer necessitates additional logging and message exchanges, these costs are confined to the specific log streams involved in the transaction's commit path. They do not accumulate with the cluster size. These findings confirm that our framework effectively decouples scalability from dynamic partition management, ensuring predictable performance in elastic, shared-nothing environments.

\subsection{Transfer Efficiency and Overhead}
\label{sec:transfer-efficiency}

In this subsection, we evaluate the execution efficiency and resource overhead of partition transfer under high-concurrency transactional workloads. The goal of this experiment is to examine whether partition transfer remains lightweight and predictable when executed concurrently with distributed transactions, rather than becoming a performance or resource bottleneck.

\subsubsection{Transfer Efficiency Experimental Setup}
\label{sec:transfer_efficiency_setup}

We conduct the experiment on an OceanBase cluster with partition transfer enabled. The workload is generated using sysbench with distributed read-write transactions. The number of concurrent client threads is varied from 64 to 1024 to emulate increasing transactional pressure. Each sysbench transaction includes point queries and delete--insert operations and is executed within a transactional context.

Partition transfer is performed at table granularity. The cluster contains 30 tables; each transfer randomly selects one table and migrates it from its current log stream to another randomly chosen log stream. This strategy avoids bias toward specific tables or access patterns and reflects realistic transfer scheduling. We evaluate two scenarios: \emph{intra-zone transfer}, where source and destination log streams reside in the same zone, and \emph{cross-zone transfer}, where the transfer spans different zones.

\subsubsection{Transfer Execution Time}
\label{sec:transfer_execution_time}

\begin{figure}[t]
  \centering
   \includegraphics[width=\linewidth]{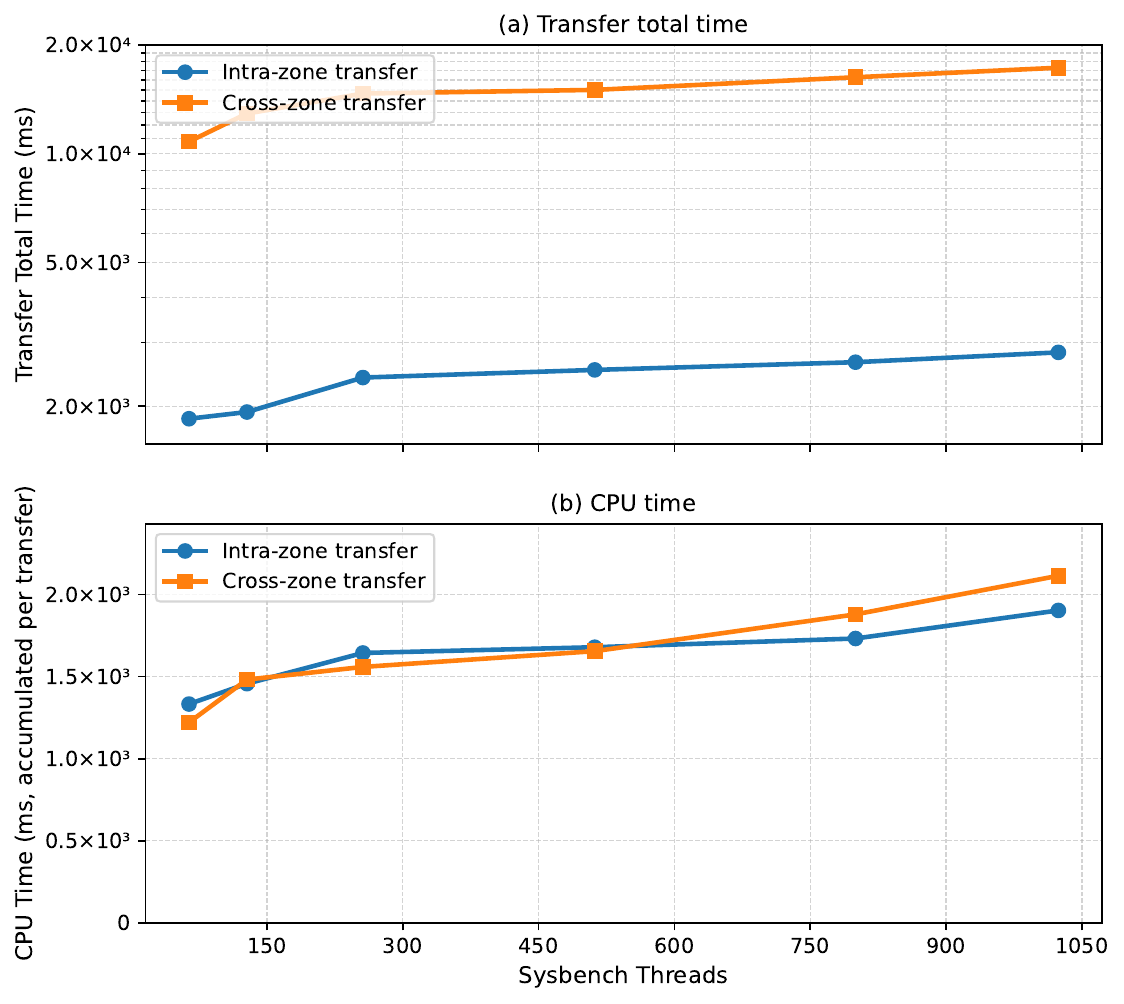}
  \caption{Partition transfer execution time and CPU overhead under varying sysbench thread counts.}
  \label{fig:transfer-time-cpu}
\end{figure}

Figure~\ref{fig:transfer-time-cpu}~(a) reports the total execution time of partition transfer under different sysbench thread counts. As transactional pressure increases, the transfer execution time grows slowly and remains stable across all tested configurations. Even under high concurrency, transfer latency does not exhibit abrupt increases or instability.

Cross-zone transfers incur higher latency than intra-zone transfers due to the additional communication and coordination required across zones. However, the performance gap between the two scenarios remains consistent as transactional load increases. The nearly parallel trends indicate that the overhead introduced by cross-zone communication is bounded and does not scale with the intensity of concurrent transactions. Taken together, these results indicate that partition transfer execution is largely decoupled from foreground transaction pressure and does not interfere with transaction processing on the critical path.

\subsubsection{CPU Overhead of Partition Transfer}
\label{sec:cpu_overhead}

Figure~\ref{fig:transfer-time-cpu}~(b) shows the accumulated CPU time consumed per transfer under different sysbench thread counts. CPU time is measured by aggregating the execution costs of transfer-related operations, including transaction filtering, transfer log generation, metadata reconstruction, transaction migration, and the physical transfer phase.

The results show that CPU consumption per transfer increases moderately with higher transactional pressure but remains within a narrow, predictable range. As with transfer latency, cross-zone transfers consume more CPU resources than intra-zone transfers, yet the difference remains stable across load levels. The bounded CPU overhead shows that partition transfer does not become a CPU hotspot even under high-concurrency workloads, confirming that the transfer mechanism is lightweight and can run alongside intensive distributed transaction processing.

\subsection{Tree-shaped Commit Performance}
\label{sec:exp:tree_commit}

This subsection evaluates how the geometry of the commit tree affects transaction performance under partition transfer.
Our tree-structured 2PC protocol organizes log streams as hierarchical participants, and transfer may embed additional log streams along localized branches of the commit tree.
As analyzed in Section~\ref{sec:tree_2pc:summary}, the critical-path cost of tree-shaped 2PC is primarily determined by the tree height $H$: transaction response latency requires $2H$ message
round-trips and one log synchronization, while lock release latency requires $3H$
round-trips and two log synchronizations.
In contrast, increasing fan-out mainly adds parallel branches without extending the
number of coordination rounds.

Following the evaluation methodology of Section~\ref{sec:experimental_setup}, we report both throughput and client-side latency metrics.
We emphasize tail latency as a key indicator of coordination stalls and transfer-induced instability, consistent with our prior evaluations that prioritize P99 behavior under dynamic transfer events.

\subsubsection{Methodology: Constructing Width/Depth-controlled Commit Trees}
\label{sec:exp:tree_commit:method}

We construct commit trees with controlled \emph{width} and \emph{depth} by manipulating transfer patterns while keeping workload intensity and system resources unchanged.
To increase \textbf{tree depth}, we repeatedly transfer the same partition across different log stream groups, forming a transfer chain that extends the commit path (i.e., increases $H$).
To increase \textbf{tree width}, we transfer multiple partitions to different log stream groups at the same hierarchy level, enlarging the fan-out while keeping the depth shallow.
We then measure the achieved throughput (TPS) and transaction latency (mean, P90, P99) under \texttt{transfer-on} and the \texttt{transfer-off} baseline.

\subsubsection{Impact of Tree Width}
\label{sec:exp:tree_commit:width}

\begin{figure*}[t]
  \centering
  \begin{subfigure}[t]{0.48\textwidth}
    \centering
    \includegraphics[width=\linewidth]{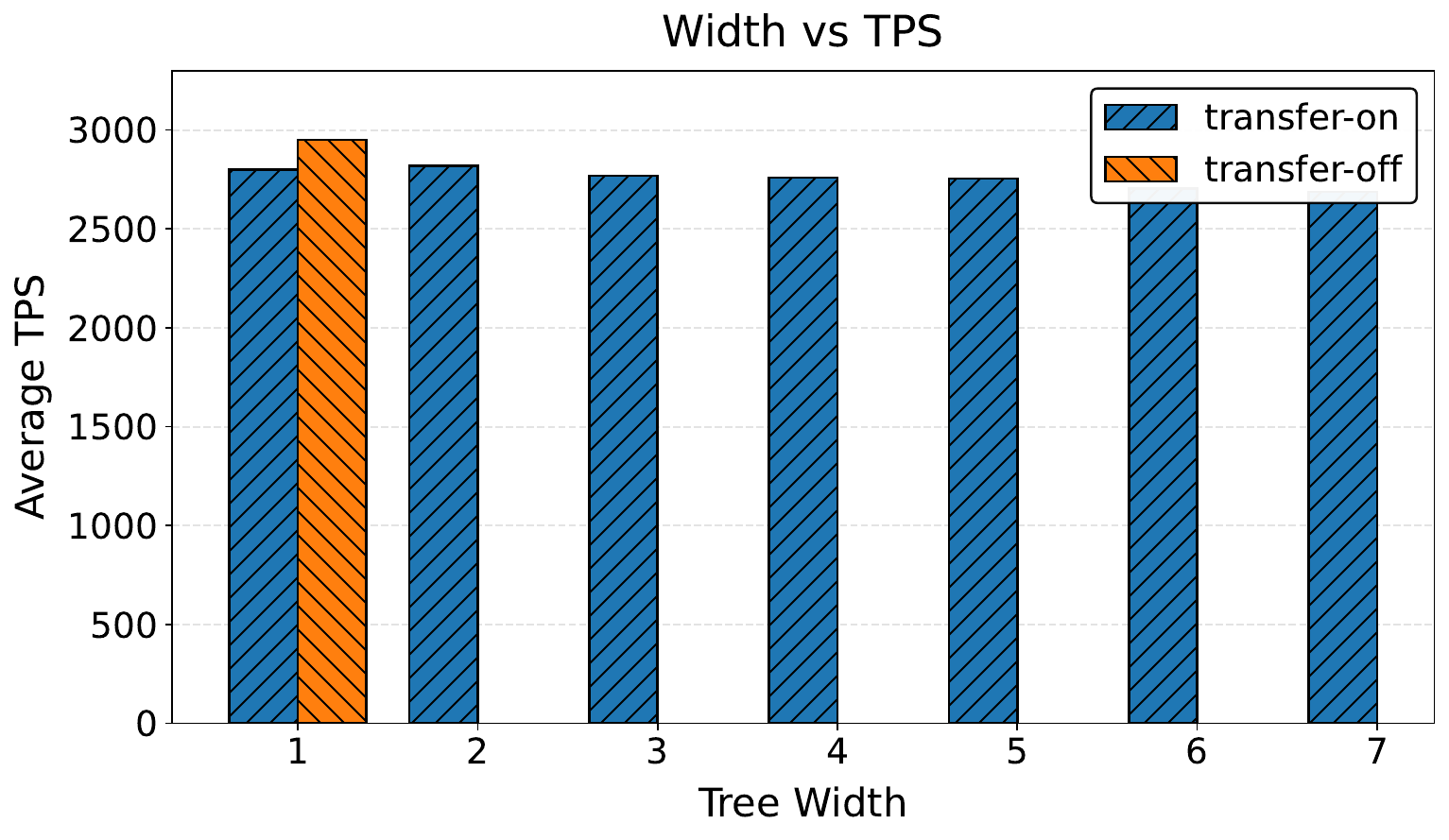}
    \caption{Throughput vs.\ tree width.}
    \label{fig:exp_width_vs_tps}
  \end{subfigure}
  \hfill
  \begin{subfigure}[t]{0.48\textwidth}
    \centering
    \includegraphics[width=\linewidth]{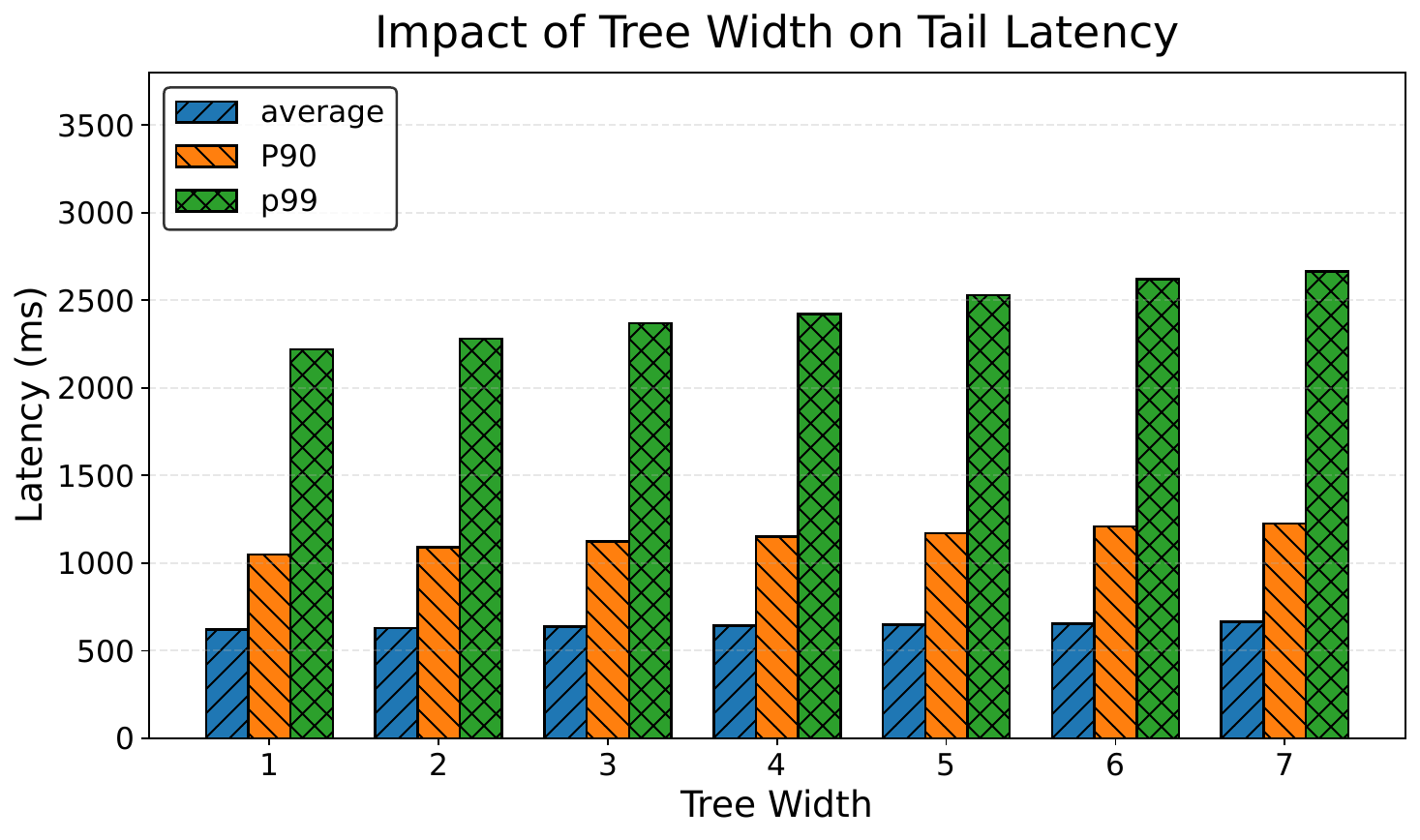}
    \caption{Latency vs.\ tree width.}
    \label{fig:exp_width_vs_latency}
  \end{subfigure}
  \caption{Tree width sensitivity under transfer.}
  \label{fig:exp_width_results}
\end{figure*}

Figure~\ref{fig:exp_width_vs_tps} reports throughput as the tree width increases.
We observe that TPS remains stable across a wide range of widths under \texttt{transfer-on}, with only minor fluctuations.
This indicates that increasing fan-out does not significantly affect the critical commit path.
The result aligns with the analysis in Section~\ref{sec:tree_2pc:summary}: width primarily increases the number of parallel branches, while the number of coordination rounds on the critical path remains unchanged.

Figure~\ref{fig:exp_width_vs_latency} shows the corresponding latency distributions.
Both mean and P90 latency increase only slightly as width grows, suggesting that the average coordination overhead remains bounded.
In contrast, P99 exhibits a moderate upward trend with larger widths.
This behavior is expected in hierarchical coordination: a larger fan-out increases the probability of encountering a straggler branch (e.g., transient queuing, log flush delay, or scheduling delay) among parallel participants, which is reflected primarily in tail latency.
Importantly, the P99 increase is gradual rather than abrupt, indicating that width-induced overhead does not amplify into system-wide instability; instead, it remains localized to the affected commit branches.

\subsubsection{Impact of Tree Depth}
\label{sec:exp:tree_commit:depth}

\begin{figure*}[t]
  \centering
  \begin{subfigure}[t]{0.48\textwidth}
    \centering
    \includegraphics[width=\linewidth]{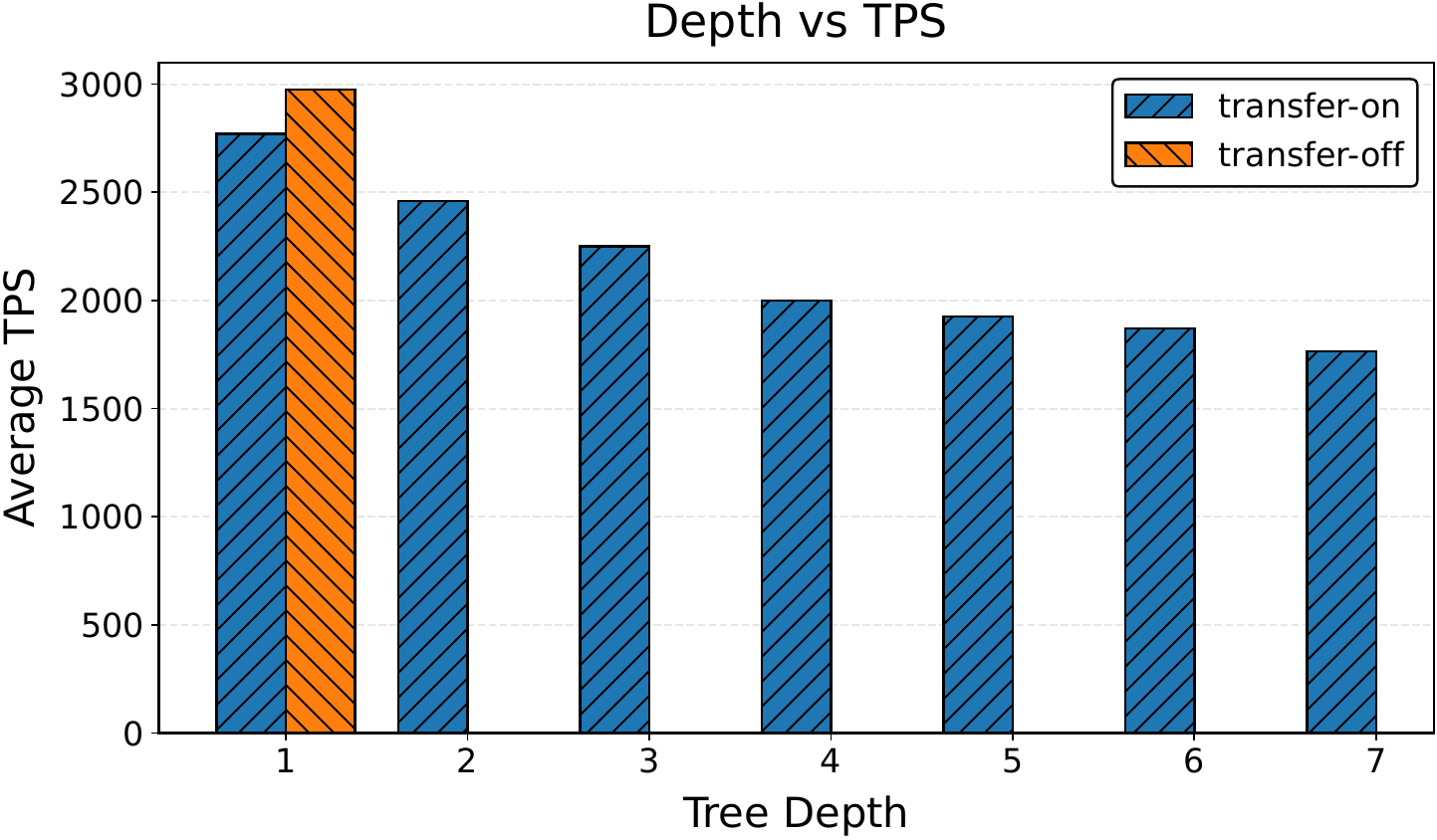}
    \caption{Throughput vs.\ tree depth.}
    \label{fig:exp_depth_vs_tps}
  \end{subfigure}
  \hfill
  \begin{subfigure}[t]{0.48\textwidth}
    \centering
    \includegraphics[width=\linewidth]{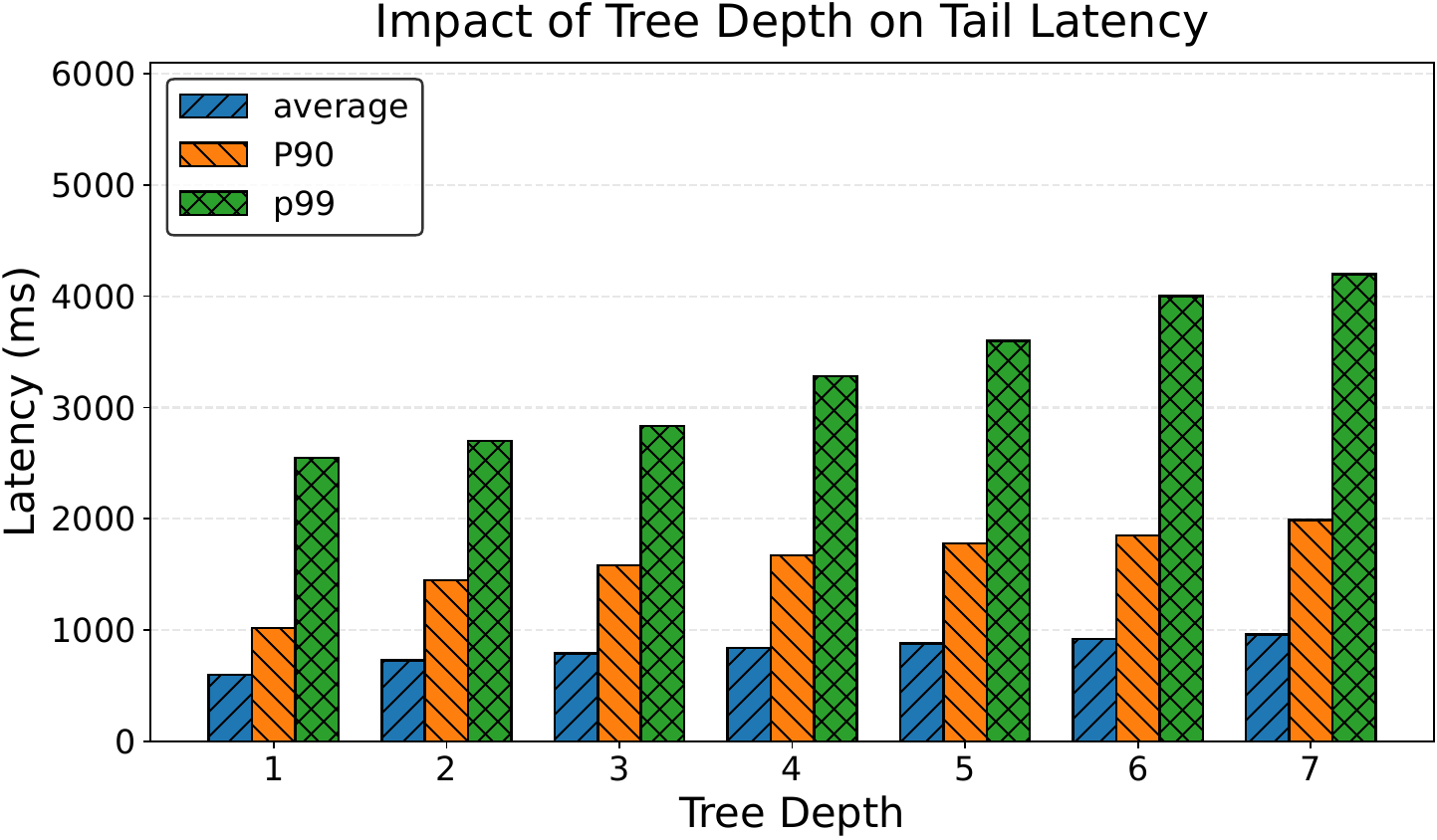}
    \caption{Latency vs.\ tree depth.}
    \label{fig:exp_depth_vs_latency}
  \end{subfigure}
  \caption{Tree depth sensitivity under transfer.}
  \label{fig:exp_depth_results}
\end{figure*}

We next increase the tree depth to directly stress the length of the coordination path.
Figure~\ref{fig:exp_depth_vs_tps} shows that throughput decreases monotonically with depth.
Unlike width, depth increases the number of coordination hops and the amount of synchronous logging that lies on the critical path, thereby reducing transaction completion rate under fixed client concurrency.
Notably, the throughput degradation is smooth and predictable, and we do not observe abrupt collapse,
indicating that the transfer-aware commit protocol remains stable even under artificially deep transfer chains.

Figure~\ref{fig:exp_depth_vs_latency} further confirms that latency grows with depth across all reported
percentiles, with P99 increasing more noticeably than mean/P90.
This trend directly matches the theoretical model in Section~\ref{sec:tree_2pc:summary}, where transaction response latency
requires $2H$ message round-trips and lock release latency requires $3H$ round-trips.
As $H$ grows, delays at each level (e.g., RPC scheduling, queuing, and replicated log synchronization)
accumulate along the critical path, amplifying tail latency.
Despite this increase, the growth remains proportional to depth rather than exhibiting runaway amplification,
supporting our earlier claim that transfer overhead is confined to localized commit paths rather than expanding
the global coordination scope.

These results validate the structural performance properties of our tree-shaped 2PC protocol under transfer.
Tree \emph{depth} is the dominant factor that increases coordination cost and tail latency, while tree \emph{width}
has limited impact on throughput and average latency and only moderately affects P99 through straggler probability.
The predictable sensitivity to commit-tree geometry demonstrates that our transfer-aware design introduces bounded,
localized overhead and maintains stable transaction performance under dynamic partition movement.

\subsection{Coordination Scalability: From Partition to Log-Stream Granularity}
\label{sec:exp:coord_scalability}

This subsection evaluates the fundamental coordination scalability difference between OceanBase 3.x and 4.x. While Section~7.5 studies the structural sensitivity of tree-shaped 2PC under transfer, here we examine a more fundamental architectural question: how the choice of coordination granularity affects the scalability of transaction commit.

OceanBase 3.x organizes 2PC participants at \emph{partition granularity}. Each partition independently participates in the commit protocol,
and a distributed transaction spanning $N$ partitions introduces $N$ participants.
In contrast, OceanBase 4.x introduces the \emph{log-stream granularity}, where multiple partitions are aggregated into a single log stream and participate in 2PC as one coordination unit.
This design reduces the effective participant cardinality and
redefines the coordination boundary of distributed transactions.

\subsubsection{Motivation}

In large-scale distributed systems, coordination scalability is a critical bottleneck.
Under partition granularity, the number of 2PC participants grows linearly with
the number of partitions involved in a transaction.
As the partition count increases, the system experiences increasing
message fan-out, acknowledgment aggregation overhead, scheduling contention,
and replicated log synchronization pressure.
Even if individual message handling is efficient,
the coordination cardinality itself becomes the dominant cost factor.

Log-stream granularity fundamentally changes this scaling behavior.
By aggregating partitions into log streams,
multiple data shards share a single coordination participant.
Thus, the number of 2PC participants grows with the number of log streams,
rather than the raw partition count.
Since log streams are significantly fewer than partitions,
this aggregation reduces coordination complexity from a partition-level scope
to a higher-level structural abstraction.

This experiment aims to quantitatively validate that the scalability gain
of 4.x stems from structural coordination reduction rather than incremental optimization.

\subsubsection{Experimental Setup}

We construct transactions that span varying numbers of partitions,
ranging from hundreds to tens of thousands.
For each configuration, we measure the 2PC commit latency
under both 3.x (partition granularity) and 4.x (log-stream granularity).
To isolate coordination overhead,
we measure only the commit phase latency while keeping workload intensity,
data distribution, and hardware configuration identical across versions.
No transfer operations are involved in this experiment,
ensuring that the measured cost reflects pure 2PC coordination behavior.

\subsubsection{Results and Analysis}

\begin{figure}[t]
  \centering
  \includegraphics[width=\linewidth]{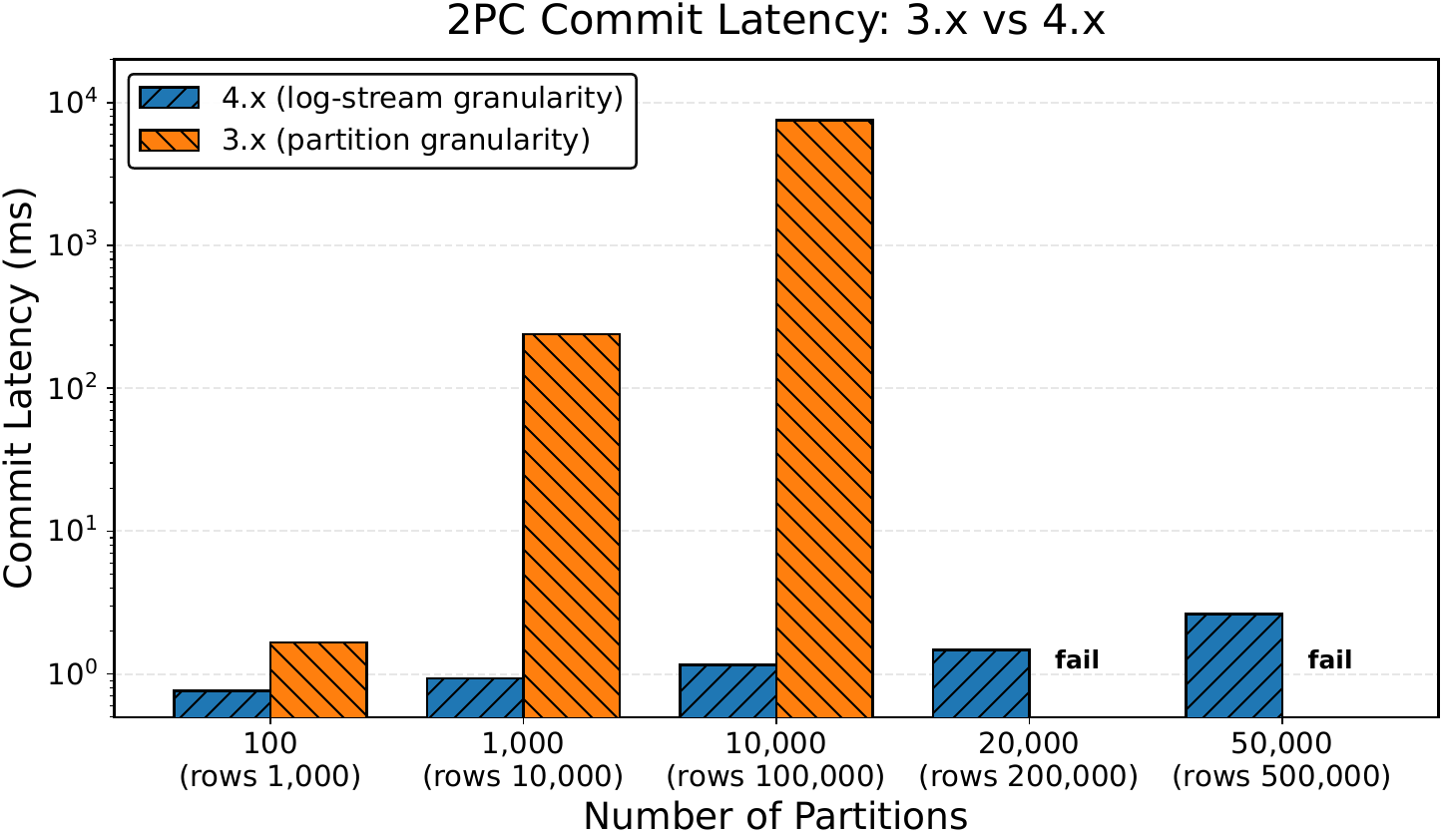}
  \caption{2PC commit latency comparison.}
  \label{fig:exp_commit_latency_compare}
\end{figure}

Figure~\ref{fig:exp_commit_latency_compare} presents the measured commit latency
as the number of partitions increases.
Under 3.x, commit latency grows rapidly with partition count.
The growth trend is close to linear, reflecting the fact that each additional
partition introduces another independent participant in the coordination phase.
As $N$ becomes large, the cumulative coordination overhead
significantly increases message aggregation delay, log synchronization pressure,
and scheduling contention.
At high partition counts, 3.x fails to complete commits within reasonable bounds,
demonstrating coordination amplification and scalability breakdown.

In contrast, 4.x exhibits substantially slower latency growth.
Although commit latency still increases with partition count,
the slope is significantly reduced.
This behavior confirms that log-stream aggregation effectively decouples
coordination cardinality from raw partition count.
Multiple partitions within the same log stream are represented by a single
participant in the commit tree, preventing participant explosion
and limiting the growth of coordination rounds.

Importantly, the scalability improvement is structural.
Rather than optimizing RPC processing or reducing logging cost,
4.x reduces the coordination domain itself.
This transformation changes the scaling characteristic of 2PC from
partition-level linear growth to log-stream-level bounded growth.
As a result, even at partition counts where 3.x becomes unstable,
4.x continues to operate predictably and maintain acceptable commit latency.

The results demonstrate that the transition from partition granularity to log-stream granularity is not merely an implementation refinement, but a fundamental architectural evolution. By redefining the atomic unit of migration and coordination, 4.x prevents coordination explosion and establishes a scalable foundation for transfer-aware distributed transactions. This structural improvement also explains why the tree-shaped 2PC design in 4.x maintains stable behavior under dynamic partition movement, as validated in Section~7.5.

Overall, this experiment confirms that log-stream aggregation significantly improves coordination scalability. Compared with partition-level 2PC in 3.x,
log-stream-level coordination in 4.x achieves dramatically better commit latency scaling as partition count increases, validating the architectural advancement of the proposed design.

\section{Related Work}
\label{sec:related_works}

Distributed transaction coordination has been a central challenge in large-scale database systems. Traditional 2PC protocols~\cite{lampson1993new, atif2009analysis, desai1996performance, samaras1995two} form the foundation of ACID guarantees but face scalability and latency issues in modern systems due to their reliance on per-partition coordination. Recent advancements in distributed databases address these limitations through novel optimizations and architectural innovations. This section reviews related work in three key areas: global transaction coordination, log-structured storage optimization, and dynamic partitioning with transfer handling.

\subsection{Global Transaction Coordination} 

Systems such as Spanner~\cite{corbett2013spanner} introduce hybrid logical clocks and TrueTime APIs to manage global transactions with strong consistency, but they still rely on per-partition coordination and thus inherit the scalability bottlenecks of traditional 2PC. Calvin~\cite{thomson2012calvin} shifts to deterministic execution and log-based scheduling to reduce coordination overhead, while CockroachDB~\cite{taft2020cockroachdb} uses hybrid logical clocks to optimize 2PC for geo-distributed environments. 

However, these approaches do not explicitly address the overhead caused by dynamic partition transfers, nor do they leverage log stream aggregation to reduce participant counts. Our framework differs by treating log streams as atomic participants, dramatically reducing coordination messages when multiple partitions reside on and share the same log stream.

\subsection{Log-Structured Storage and Optimization} 

Log-structured designs have gained traction for improving throughput and durability. Log-structured systems leverage log-structured storage for high-throughput transactional key-value stores, emphasizing atomicity and isolation. FASTER~\cite{chandramouli2018faster} further optimizes log-structured storage for low-latency operations, thereby demonstrating the potential of log-centric architectures. 

MaLT~\cite{malt_sigmod2025} manages large transaction state and recovery in OceanBase's LSM-tree storage engine. While these works focus on storage efficiency and throughput optimization, they do not integrate log streams as atomic units for transaction coordination—a capability that is central to our single-log-stream 2PC framework. Our approach extends log-structured principles to transaction coordination by using the log stream itself as the coordination unit rather than individual partitions.

\subsection{Dynamic Partitioning and Transfer} 

Existing solutions for dynamic partitioning, such as those in CockroachDB~\cite{taft2020cockroachdb}, often involve re-validation of participant lists post-transfer or introduce circular dependencies between partitions and logs~\cite{kolltveit2007circular, samaras1995two}. They require explicit participant list updates and complex recovery logic when partitions migrate during active transactions. In contrast, our tree-structured 2PC protocol inherently accommodates partition transfers by recursively tracking log stream dependencies, eliminating the need for explicit participant list updates and simplifying recovery logic. This design also addresses the ``circular transaction'' problem~\cite{kolltveit2007circular}, ensuring linearizable commit progression even during live migrations through automatic tree construction. When transactions contend on locks during transfer and commit, OceanBase relies on distributed deadlock detection and resolution (LCL~\cite{lcl_icde2023}, LCL+~\cite{lcl_plus_tocs2025}) to avoid indefinite blocking.

The main difference between the above related work and our tree-structured 2PC framework is that we take the \emph{log stream} as the atomic participant and coordination unit instead of the partition, and we natively support dynamic partition transfer and context loss (via recursive log stream trees and unknown-state mechanisms) without post-transfer re-validation of participant lists or circular dependency handling. While prior systems have made strides in distributed transaction management, they either (1) treat individual partitions as atomic participants, leading to scalability bottlenecks, or (2) struggle with dynamic topologies during transfers. Our work bridges these gaps by abstracting log streams as atomic units, reducing participant counts, and introducing an unknown-state mechanism to preserve consistency during context loss. These innovations align with the principles of modern log-structured systems~\cite{chandramouli2018faster} while addressing specific challenges in dynamic, cloud-native environments.

\section{Conclusions}
\label{sec:Conclusions}

This paper proposes a novel tree-structured 2PC framework for OceanBase that addresses scalability and latency challenges in distributed transaction coordination. By redefining the atomic unit from individual partitions to single-machine log streams, our approach reduces participant counts by orders of magnitude (e.g., 99\% reduction for 100 partitions), minimizing coordination overhead and bandwidth consumption. The tree-structured 2PC protocol dynamically handles partition transfers via a coordinator-rooted DAG topology, eliminating explicit participant list updates and resolving circular dependencies. During live migrations, transfer contexts are recursively embedded as leaf nodes in commit trees, ensuring linearizable progress under topology changes. Additionally, unknown-state mechanisms (prepare\_unknown, trans\_unknown) prevent consistency violations during context loss, avoiding erroneous aborts caused by misleading participant responses while isolating users from ambiguity.

Experimental evaluation on OceanBase 4.4.1 demonstrates that our framework achieves performance approaching that of single-machine transactions, with significantly reduced latency and bandwidth consumption. The system exhibits linear scalability under heavy workloads and maintains stable performance during concurrent partition transfers. This work bridges strong consistency with high performance, offering a scalable blueprint for next-generation distributed systems. By abstracting log streams as atomic units and introducing robust topology-handling mechanisms, our solution sets a new standard for low-latency, high-throughput transactions across geo-distributed architectures. The practical implementation in OceanBase validates its effectiveness in real-world scenarios, particularly for applications requiring elastic resource efficiency and dynamic partition management in cloud-native environments.



\bibliographystyle{ACM-Reference-Format}
\bibliography{sample.bib}

\newpage
\appendix
\section{Detailed Proofs}
\label{sec:appendix:detailed_proofs}

\subsection{Proof of Theorem~\ref{thm:log_stream_tree}}
\label{sec:appendix:log_stream_tree}
\begin{proof}
    Let $P$ be the set of partitions participating in concurrency control, and $L_{final}(p)$ denote the final log stream location of partition $p \in P$. The minimum set requirement is $\{L_{final}(p) : p \in P\}$.
    
    The log stream tree is constructed by including: (1) initial log streams where partitions in $P$ perform operations, and (2) all destination log streams from partition transfers, recursively following transfer records.
    
    \textbf{Completeness}: For any $p \in P$, if $p$ never transfers, $L_{final}(p)$ is the initial log stream (included by construction). If $p$ transfers along $L_0 \rightarrow \cdots \rightarrow L_k = L_{final}(p)$, recursive construction includes all $L_i$, hence $L_{final}(p)$.
    
    \textbf{Minimality}: A log stream is added only if it hosts partitions in $P$ (initially or via transfer), so every tree node corresponds to a valid partition location.
    
    Leaf nodes are exactly $\{L_{final}(p) : p \in P\}$. Since the minimum set requirement accepts any participant list containing this set, the tree satisfies it.
\end{proof}

\subsection{Proof of Theorem~\ref{thm:transfer_principle}}
\label{sec:appendix:transfer_principle}
\begin{proof}
We prove the transfer principle (Theorem~\ref{thm:transfer_principle}) by contradiction. The log formulation and the message formulation are equivalent because 2PC messages are generated from and reflect the logged state; we give the argument in terms of logs; it applies to messages in the same way.

\textbf{Part 1: Transfer-before logs must be migrated.}
Assume, for contradiction, that a transaction log $L$ written before the transfer-out log $T_{out}$ is not migrated to the destination log stream $dst$. By the minimum set requirement (Section~\ref{sec:background:ACID}), the participant list must include all log streams where partitions participating in concurrency control are finally located. Since the partition is transferred to $dst$, $dst$ becomes the final location. If $L$ is not migrated, then:
\begin{itemize}
    \item The destination log stream $dst$ lacks the transaction context from $L$, violating the completeness requirement for the participant list.
    \item The transaction state at $dst$ is incomplete, potentially causing incorrect commit/abort decisions.
    \item This violates atomicity, as the transaction's state is split between source and destination log streams.
\end{itemize}
Therefore, all logs written before $T_{out}$ must be migrated to $dst$ through the transfer process. The same requirement holds for 2PC messages sent before transfer completion, since those messages correspond to state that is (or will be) persisted in logs and must be applied to the destination via the transfer process.

\textbf{Part 2: Transfer-after logs must account for transfer impact.}
Assume, for contradiction, that a transaction log $L'$ written after $T_{out}$ does not account for the transfer's impact. This means $L'$ does not include the destination log stream $dst$ in its participant list or transaction context. Since the partition has been transferred to $dst$ (as recorded by $T_{out}$), $dst$ is now the final location of the partition. If $L'$ ignores this:
\begin{itemize}
    \item The participant list in $L'$ is incomplete, missing $dst$ which hosts the partition at commit time.
    \item This violates the minimum set requirement, as the participant list does not contain all final log stream locations.
    \item The 2PC protocol may fail to coordinate with $dst$, leading to inconsistent commit/abort decisions across log streams.
\end{itemize}
Therefore, all logs written after $T_{out}$ must include $dst$ in their participant lists (via \textit{interm\_parts} or \textit{incr\_parts}) to account for the transfer's impact. The same requirement holds for 2PC messages sent after transfer completion: they must be directed to (or include) $dst$ because messages are generated from the logged state and the participant list in that state must include $dst$.

The correctness of this approach is ensured by Algorithms~\ref{alg:partition_transfer} and \ref{alg:log_commit}, which guarantee that: (1) transfer migrates all pre-transfer transaction context (logs and the state reflected in messages), and (2) log commits atomically merge \textit{interm\_parts} into \textit{incr\_parts} before committing, ensuring complete participant lists for both logs and subsequent messages.
\end{proof}
    
\subsection{Proof of Theorem~\ref{thm:transfer_correctness}}
\label{sec:appendix:transfer_correctness}

\begin{proof}
    We show that the two requirements stated in Section~\ref{sec:transfer_con_transfer_commit} are satisfied, and that they imply Theorem~\ref{thm:transfer_principle}.
    
    \textbf{Requirement 1: Partition transfer must migrate all transaction context written before the transfer-out log to the destination.}
    
    Algorithm~\ref{alg:partition_transfer} acquires the transfer lock (line 1), then collects transaction context and blocks log writing for affected transactions (lines 2--9), then commits the transfer-out logs on source and destination (lines 10--11), and then adds $\mathit{dst}$ to $\mathit{interm\_parts}$ for affected transactions and unblocks them (lines 12--20). The transfer mechanism (of which this algorithm is the coordination part) is designed so that all transaction context written before the transfer-out log is migrated to the destination before or as part of the transfer that leads to committing that log; the commit of the transfer-out logs (lines 10--11) is the linearization point of ``transfer completion.'' Hence any transaction log or 2PC message sent before that commit is part of the pre-transfer context and is applied to the destination through the transfer process. Thus Requirement~1 is satisfied by the transfer process, and Algorithm~\ref{alg:partition_transfer} does not commit the transfer-out log before the migration of pre-transfer context is ensured.
    
    \textbf{Requirement 2: Transactions must include all previous transfer destination log streams in \textit{incr\_parts}.}
    
    We use the shared transfer lock and the atomicity of the two algorithms. Consider any transaction $t$ and any partition transfer that commits a transfer-out log for tablet $tab$ to destination $dst$ with timestamp $ts$, where $tab$ is involved in $t$ and $ts > t.ts'$ (so the transfer completes after $t$'s last relevant timestamp). When the 2PC log for $t$ is committed (via Algorithm~\ref{alg:log_commit}), we must have $dst \in t.\textit{incr\_parts}$ (because the committed log includes \textit{participants} and \textit{incr\_parts}, and $dst$ must be in the participant list per Theorem~\ref{thm:transfer_principle}).
    
    \textit{Case A:} The transfer commits its transfer-out logs (Algorithm~\ref{alg:partition_transfer}, lines 10--11) \emph{before} $t$ runs the merge and commit in Algorithm~\ref{alg:log_commit}. Then Algorithm~\ref{alg:partition_transfer} (lines 12--20) runs the loop over $tc$ and adds $dst$ to $t.\textit{interm\_parts}$ (since $tab$ is involved in $t$ and $ts > t.ts'$). When $t$ later runs Algorithm~\ref{alg:log_commit}, it merges \textit{interm\_parts} into \textit{incr\_parts} and commits (lines 6--7), so $dst \in t.\textit{incr\_parts}$ when the 2PC log is committed.
    
    \textit{Case B:} The transfer has not yet committed the transfer-out log when $t$ enters Algorithm~\ref{alg:log_commit}. Then the procedure acquires the lock (line 1), and for $t$ (if not stopped) merges \textit{interm\_parts} into \textit{incr\_parts} and commits the 2PC log (lines 4--7). When the transfer later runs Algorithm~\ref{alg:partition_transfer}, it commits the transfer-out log and then adds $dst$ to $t.\textit{interm\_parts}$. So this transfer completed \emph{after} $t$'s 2PC log commit; the participant list in that log need not include $dst$ (Theorem~\ref{thm:transfer_principle} requires logs written \emph{after} the transfer-out to account for the transfer). So no violation.
    
    \textit{Case C:} The transfer commits the transfer-out log \emph{after} $t$ acquires the lock in Algorithm~\ref{alg:log_commit} but \emph{before} $t$ merges and commits. Because the transfer must acquire the same lock (Algorithm~\ref{alg:partition_transfer}, line 1), the transfer waits until the procedure releases the lock (Algorithm~\ref{alg:log_commit}, line 10). So the transfer cannot commit in between $t$'s lock acquisition and release. Thus the only interleavings are Case A or B, and Requirement~2 holds.
    
    By the shared lock, exactly one of Case A or B applies for each (transfer, transaction) pair: either the transfer adds $dst$ to $t.\textit{interm\_parts}$ before $t$'s merge, or the transfer completes after $t$'s 2PC log commit. Hence every 2PC log commit by $t$ includes in $t.\textit{incr\_parts}$ every destination $dst$ of a transfer that completed before that commit and that involves a tablet in $t$.
    
\textbf{Satisfaction of Theorem~\ref{thm:transfer_principle}.}

    The transfer principle requires (1) logs and 2PC messages produced before the transfer-out log to be applied to the destination via the transfer process, and (2) logs and messages produced after the transfer-out log to include the destination (e.g., in \textit{incr\_parts}). (1) is guaranteed by Requirement~1. (2) is guaranteed by Requirement~2 and the fact that committed 2PC logs contain \textit{participants} and \textit{incr\_parts}, and 2PC messages are generated from that state. Thus Theorem~\ref{thm:transfer_principle} is satisfied.
\end{proof}

\subsection{Proof of Theorem~\ref{thm:safety_correctness}}
\label{sec:appendix:safety}

\begin{proof}
In the traditional flat 2PC model, it is well known that all participants remain in the same outcome (all committed or all aborted), and once a participant enters commit or abort it stays in that state. This safety property follows from the structure of the two-phase commit rules: commit is enabled only when all participants vote positively, while any negative vote forces a global abort.

The TLA+ specification in Appendix~\ref{sec:appendix:tla_spec} models the same decision logic. A node may enter the commit state only after collecting positive votes from all its children or after receiving a commit decision from its parent. Similarly, a node may enter the abort state only if a negative vote is observed or an abort decision is received from its parent. Thus, conflicting local decisions cannot be generated within a subtree.

The tree-shaped protocol is a hierarchical implementation of the same 2PC logic. Each internal node first coordinates its subtree to reach a decision and then behaves as a participant toward its parent. This is equivalent to collapsing each decided subtree into a single participant in a flat 2PC execution. Therefore, every execution of the tree-shaped protocol corresponds to a valid execution of the traditional flat 2PC specification.

Since the flat 2PC protocol satisfies the agreement invariant (no partition of participants into committed and aborted states), and the tree-shaped protocol follows the same state transitions and message rules, the safety property of the tree-shaped 2PC protocol follows directly.
\end{proof}

\subsection{Proof of Theorem~\ref{thm:liveness_correctness}}
\label{sec:appendix:liveness}

\begin{proof}
The liveness property of the traditional 2PC protocol states that, under reliable message delivery and fairness of enabled actions, every participant eventually reaches a terminal state and all participants agree on the same outcome.

The specification in Appendix~\ref{sec:appendix:tla_spec} preserves these assumptions. Messages are modeled as reliably delivered, and enabled protocol actions are assumed to eventually execute. When the root initiates the protocol, prepare requests are propagated along the tree. Each node eventually collects votes from its children, after which either a commit or abort decision becomes enabled. The decision is then propagated downward, and acknowledgements are returned upward. Since the system is finite and fairness ensures that no enabled action remains forever postponed, every node eventually reaches a terminal decision and completes the protocol.

As in the safety argument, each subtree execution can be viewed as a participant in an equivalent flat 2PC execution. Therefore, every fair execution of the tree-shaped protocol corresponds to a fair execution of the traditional 2PC specification, and termination follows directly.
\end{proof}

\section{State Machine}
\label{sec:appendix:state_machine}

\textit{Note:} This appendix specifies the core state machine
consistent with the TLA+ specification in Appendix~\ref{sec:appendix:tla_spec}.
Optimizations described in the main text are applied on top of this specification.

\noindent$\bullet\ $ handle\_2pc\_prepare\_request($p$)

\begin{itemize}
    \item Enable: When the state is RUNNING and a PrepareReq request is received from a parent participant $p$, or when a transaction commit request is issued at the root.
    \item Action: If $p$ exists and the parent is not yet recorded, record $p$ as the parent; merge pending participants into the current participant list; enter the PREPARE state; initialize votes of all participants as unknown; send PrepareReq messages to all participants in the participant list.
\end{itemize}

\noindent$\bullet\ $ handle\_duplicate\_prepare\_request($p$)

\begin{itemize}
    \item Enable: When the state is PREPARE and a PrepareReq request is received from a participant that is not the recorded parent.
    \item Action: Reply PrepareResp with status OK without changing the current state.
\end{itemize}

\noindent$\bullet\ $ handle\_orphan\_prepare\_request($p$)

\begin{itemize}
    \item Enable: When the state is ABORT or TOMBSTONE and a PrepareReq request is received.
    \item Action: Reply PrepareResp with status NO.
\end{itemize}

\noindent$\bullet\ $ handle\_2pc\_prepare\_response($p$, $p\_status$)

\begin{itemize}
    \item Enable: When the state is PREPARE and a PrepareResp message with $p\_status$ is received from a child participant $p$.
    \item Action: Update the vote of participant $p$ to OK or NO accordingly.
\end{itemize}

\noindent$\bullet\ $ handle\_2pc\_commit\_decided()

\begin{itemize}
    \item Enable: When the state is PREPARE and all collected votes indicate success.
    \item Action: Merge pending participants; enter the COMMIT state; initialize acknowledgement tracking; if the parent participant exists, send PrepareResp with status OK to the parent; otherwise, send Commit messages to all participants in the participant list.
\end{itemize}

\noindent$\bullet\ $ handle\_2pc\_abort\_decided()

\begin{itemize}
    \item Enable: When the state is PREPARE and at least one vote indicates failure.
    \item Action: Merge pending participants; enter the ABORT state; initialize acknowledgement tracking; if the parent participant exists, send PrepareResp with status NO to the parent; otherwise, send Abort messages to all participants in the participant list.
\end{itemize}

\noindent$\bullet\ $ handle\_2pc\_commit\_request($p$)

\begin{itemize}
    \item Enable: When the state is RUNNING or PREPARE and a Commit message is received from the parent participant $p$.
    \item Action: Merge pending participants; enter the COMMIT state; send Commit messages to all participants; send Ack to the parent participant.
\end{itemize}

\noindent$\bullet\ $ handle\_2pc\_abort\_request($p$)

\begin{itemize}
    \item Enable: When the state is RUNNING or PREPARE and an Abort message is received from the parent participant $p$.
    \item Action: Merge pending participants; enter the ABORT state; send Abort messages to all participants; send Ack to the parent participant.
\end{itemize}

\noindent$\bullet\ $ handle\_orphan\_commit\_request($p$)

\begin{itemize}
    \item Enable: When the state is COMMIT or TOMBSTONE and a Commit message is received.
    \item Action: Reply Ack without changing the local state.
\end{itemize}

\noindent$\bullet\ $ handle\_orphan\_abort\_request($p$)

\begin{itemize}
    \item Enable: When the state is ABORT or TOMBSTONE and an Abort message is received.
    \item Action: Reply Ack without changing the local state.
\end{itemize}

\noindent$\bullet\ $ handle\_2pc\_ack\_response($p$)

\begin{itemize}
    \item Enable: When the state is COMMIT or ABORT and an Ack message is received from the child participant $p$.
    \item Action: Mark the acknowledgement of participant $p$ as received.
\end{itemize}

\noindent$\bullet\ $ forget\_ctx()

\begin{itemize}
    \item Enable: When the state is COMMIT or ABORT and acknowledgements from all participants are received.
    \item Action: Enter the TOMBSTONE state.
\end{itemize}

\noindent$\bullet\ $ add\_intermediate\_participant($p$)

\begin{itemize}
    \item Enable: When the state is RUNNING, PREPARE, COMMIT, or ABORT and a new participant $p$ is introduced by transfer.
    \item Action: Add $p$ into the pending participant list; it will be merged at the next phase transition.
\end{itemize}

\section{Full TLA+ Specification}
\label{sec:appendix:tla_spec}
\textit{Note:} This appendix gives the full TLA+ specification (module \texttt{NewTwoPhaseCommit}) for the flat 2PC model, with TM/RM roles and message types; it is consistent with the state machine in Appendix~\ref{sec:appendix:state_machine}.

\input{tla+/tla+}

\end{document}

%% file: tla+/tla+.tex

\tlatex
\setboolean{shading}{true}

\@x{}\moduleLeftDash\@xx{ {\MODULE} 2pc\_tla}\moduleRightDash\@xx{}%
\@x{ {\EXTENDS} Naturals ,\, Sequences ,\, FiniteSets ,\, TLC}%
\@pvspace{8.0pt}%

\@x{ {\CONSTANTS}}%
\@x{\@s{16.4} Node ,\,}%
\@x{\@s{16.4} Root ,\,}%
\@x{\@s{16.4} InitChildren}%
\@pvspace{8.0pt}%
\@x{ {\VARIABLES}}%
\@x{\@s{16.4} rmState ,\,}%
\@x{\@s{16.4} children ,\,}%
\@x{\@s{16.4} intermediate\_children ,\,}%
\@x{\@s{16.4} msgs ,\,}%
\@x{\@s{16.4} votes ,\,}%
\@x{\@s{16.4} acks ,\,}%
\@x{\@s{16.4} parent}%
\@pvspace{8.0pt}%
\@x{ Vars \.{\defeq} {\langle} rmState ,\, children ,\,}%
\@x{\@s{41.61} intermediate\_children ,\,}%
\@x{\@s{41.61} msgs ,\, votes ,\, acks ,\, parent {\rangle}}%
\@pvspace{8.0pt}%

\@x{ States \.{\defeq}}%
\@x{\@s{16.4} \{\@w{RUNNING} ,\,\@w{PREPARE} ,\,\@w{COMMIT} ,\,\@w{ABORT} ,\,\@w{TOMBSTONE} \}}%
\@pvspace{8.0pt}%
\@x{ MsgPrepareReq ( src ,\, dst ) \.{\defeq}}%
\@x{\@s{16.4} [ type \.{\mapsto}\@w{PrepareReq} ,\,}%
\@x{\@s{16.4} src \.{\mapsto} src ,\, dst \.{\mapsto} dst ]}%
\@pvspace{8.0pt}%
\@x{ MsgPrepareResp ( src ,\, dst ,\, status ) \.{\defeq}}%
\@x{\@s{16.4} [ type \.{\mapsto}\@w{PrepareResp} ,\, src \.{\mapsto} src ,\,}%
\@x{\@s{16.4} dst \.{\mapsto} dst ,\, status \.{\mapsto} status ]}%
\@pvspace{8.0pt}%
\@x{ MsgCommit ( src ,\, dst ) \.{\defeq}}%
\@x{\@s{16.4} [ type \.{\mapsto}\@w{Commit} ,\,}%
\@x{\@s{16.4} src \.{\mapsto} src ,\, dst \.{\mapsto} dst ]}%
\@pvspace{8.0pt}%
\@x{ MsgAbort ( src ,\, dst ) \.{\defeq}}%
\@x{\@s{16.4} [ type \.{\mapsto}\@w{Abort} ,\,}%
\@x{\@s{16.4} src \.{\mapsto} src ,\, dst \.{\mapsto} dst ]}%
\@pvspace{8.0pt}%
\@x{ MsgAck ( src ,\, dst ) \.{\defeq}}%
\@x{\@s{16.4} [ type \.{\mapsto}\@w{Ack} ,\,}%
\@x{\@s{16.4} src \.{\mapsto} src ,\, dst \.{\mapsto} dst ]}%
\@pvspace{8.0pt}%
\@x{ IsRoot ( n ) \.{\defeq} n \.{=} Root}%
\@pvspace{8.0pt}%
\@x{ MergedChildren ( n ) \.{\defeq}}%
\@x{\@s{16.4} children [ n ] \.{\cup} intermediate\_children [ n ]}%
\@pvspace{8.0pt}%
\@x{ ApplyMerge ( n ,\, mc ) \.{\defeq}}%
\@x{\@s{16.4} \.{\land} children \.{'} \.{=}}%
\@x{\@s{27.51} [ children {\EXCEPT} {\bang} [ n ] \.{=} mc ]}%
\@x{\@s{16.4} \.{\land} intermediate\_children \.{'} \.{=}}%
\@x{\@s{27.51} [ intermediate\_children {\EXCEPT} {\bang} [ n ] \.{=} \{ \} ]}%
\@pvspace{8.0pt}%
\@x{ RecordParent ( n ,\, src ) \.{\defeq}}%
\@x{\@s{16.4} [ parent {\EXCEPT} {\bang} [ n ] \.{=}}%
\@x{\@s{27.51} {\IF} @ \.{=}\@w{none} \.{\THEN} src \.{\ELSE} @ ]}%
\@pvspace{8.0pt}%
\@x{ AllVotesOk ( n ) \.{\defeq}}%
\@x{\@s{16.4} \A\, c \.{\in} children [ n ] \.{:} votes [ n ] [ c ] \.{=}\@w{ok}}%
\@pvspace{8.0pt}%
\@x{ AnyVoteNo ( n ) \.{\defeq}}%
\@x{\@s{16.4} \E\, c \.{\in} children [ n ] \.{:} votes [ n ] [ c ] \.{=}\@w{no}}%
\@pvspace{8.0pt}%
\@x{ AllAcked ( n ) \.{\defeq}}%
\@x{\@s{16.4} \A\, c \.{\in} children [ n ] \.{:} acks [ n ] [ c ] \.{=} {\TRUE}}%
\@pvspace{8.0pt}%

\@x{ Init \.{\defeq}}%
\@x{\@s{16.4} \.{\land} rmState \.{=} [ n \.{\in} Node \.{\mapsto}\@w{RUNNING} ]}%
\@x{\@s{16.4} \.{\land} children \.{=} InitChildren}%
\@x{\@s{16.4} \.{\land} intermediate\_children \.{=}}%
\@x{\@s{27.51} [ n \.{\in} Node \.{\mapsto} \{ \} ]}%
\@x{\@s{16.4} \.{\land} msgs \.{=} \{ \}}%
\@x{\@s{16.4} \.{\land} votes \.{=} [ n \.{\in} Node \.{\mapsto}}%
\@x{\@s{27.51} [ c \.{\in} children [ n ] \.{\mapsto}\@w{unknown} ] ]}%
\@x{\@s{16.4} \.{\land} acks \.{=} [ n \.{\in} Node \.{\mapsto}}%
\@x{\@s{27.51} [ c \.{\in} children [ n ] \.{\mapsto} {\FALSE} ] ]}%
\@x{\@s{16.4} \.{\land} parent \.{=} [ n \.{\in} Node \.{\mapsto}\@w{none} ]}%
\@pvspace{8.0pt}%

\@x{ RootStartToCommit \.{\defeq}}%
\@x{\@s{16.4} \.{\land} rmState [ Root ] \.{=}\@w{RUNNING}}%
\@x{\@s{16.4} \.{\land} {\LET} mc \.{\defeq} MergedChildren ( Root ) {\IN}}%
\@x{\@s{27.51} \.{\land} rmState \.{'} \.{=}}%
\@x{\@s{38.92} [ rmState {\EXCEPT} {\bang} [ Root ] \.{=}\@w{PREPARE} ]}%
\@x{\@s{27.51} \.{\land} ApplyMerge ( Root ,\, mc )}%
\@x{\@s{27.51} \.{\land} votes \.{'} \.{=} [ votes {\EXCEPT} {\bang} [ Root ] \.{=}}%
\@x{\@s{38.92} [ c \.{\in} mc \.{\mapsto}\@w{unknown} ] ]}%
\@x{\@s{27.51} \.{\land} msgs \.{'} \.{=} msgs \.{\cup}}%
\@x{\@s{38.92} \{ MsgPrepareReq ( Root ,\, c ) \.{:} c \.{\in} mc \}}%
\@x{\@s{27.51} \.{\land} {\UNCHANGED} {\langle} acks ,\, parent {\rangle}}%
\@pvspace{8.0pt}%
\@x{ Handle2pcPrepareRequest ( n ) \.{\defeq}}%
\@x{\@s{16.4} \.{\land} rmState [ n ] \.{=}\@w{RUNNING}}%
\@x{\@s{16.4} \.{\land} \E\, m \.{\in} msgs \.{:}}%
\@x{\@s{31.61} \.{\land} m \.{.} type \.{=}\@w{PrepareReq}}%
\@x{\@s{31.61} \.{\land} m \.{.} dst \.{=} n}%
\@x{\@s{31.61} \.{\land} {\LET} mc \.{\defeq} MergedChildren ( n ) {\IN}}%
\@x{\@s{43.02} \.{\land} parent \.{'} \.{=}}%
\@x{\@s{54.43} [ parent {\EXCEPT} {\bang} [ n ] \.{=} m \.{.} src ]}%
\@x{\@s{43.02} \.{\land} rmState \.{'} \.{=}}%
\@x{\@s{54.43} [ rmState {\EXCEPT} {\bang} [ n ] \.{=}\@w{PREPARE} ]}%
\@x{\@s{43.02} \.{\land} ApplyMerge ( n ,\, mc )}%
\@x{\@s{43.02} \.{\land} votes \.{'} \.{=} [ votes {\EXCEPT} {\bang} [ n ] \.{=}}%
\@x{\@s{54.43} [ c \.{\in} mc \.{\mapsto}\@w{unknown} ] ]}%
\@x{\@s{43.02} \.{\land} msgs \.{'} \.{=} msgs \.{\cup}}%
\@x{\@s{54.43} \{ MsgPrepareReq ( n ,\, c ) \.{:} c \.{\in} mc \}}%
\@x{\@s{43.02} \.{\land} {\UNCHANGED} {\langle} acks {\rangle}}%
\@pvspace{8.0pt}%
\@x{ Handle2pcDuplicatePrepareRequest ( n ) \.{\defeq}}%
\@x{\@s{16.4} \.{\land} rmState [ n ] \.{=}\@w{PREPARE}}%
\@x{\@s{16.4} \.{\land} \E\, m \.{\in} msgs \.{:}}%
\@x{\@s{31.61} \.{\land} m \.{.} type \.{=}\@w{PrepareReq}}%
\@x{\@s{31.61} \.{\land} m \.{.} dst \.{=} n}%
\@x{\@s{31.61} \.{\land} m \.{.} src \.{\neq} parent [ n ]}%
\@x{\@s{31.61} \.{\land} msgs \.{'} \.{=} msgs \.{\cup}}%
\@x{\@s{43.02} \{ MsgPrepareResp ( n ,\, m \.{.} src ,\,\@w{ok} ) \}}%
\@x{\@s{31.61} \.{\land} {\UNCHANGED} {\langle} rmState ,\, children ,\,}%
\@x{\@s{43.02} intermediate\_children ,\,}%
\@x{\@s{43.02} votes ,\, acks ,\, parent {\rangle}}%
\@pvspace{8.0pt}%
\@x{ HandleOrphan2pcPrepareRequest ( n ) \.{\defeq}}%
\@x{\@s{16.4} \.{\land} rmState [ n ] \.{\in} \{\@w{ABORT} ,\,\@w{TOMBSTONE} \}}%
\@x{\@s{16.4} \.{\land} \E\, m \.{\in} msgs \.{:}}%
\@x{\@s{31.61} \.{\land} m \.{.} type \.{=}\@w{PrepareReq}}%
\@x{\@s{31.61} \.{\land} m \.{.} dst \.{=} n}%
\@x{\@s{31.61} \.{\land} parent \.{'} \.{=} RecordParent ( n ,\, m \.{.} src )}%
\@x{\@s{31.61} \.{\land} msgs \.{'} \.{=} msgs \.{\cup}}%
\@x{\@s{43.02} \{ MsgPrepareResp ( n ,\, m \.{.} src ,\,\@w{no} ) \}}%
\@x{\@s{31.61} \.{\land} {\UNCHANGED} {\langle} rmState ,\, children ,\,}%
\@x{\@s{43.02} intermediate\_children ,\,}%
\@x{\@s{43.02} votes ,\, acks {\rangle}}%
\@pvspace{8.0pt}%

\@x{ Handle2pcPrepareResponse ( n ) \.{\defeq}}%
\@x{\@s{16.4} \.{\land} rmState [ n ] \.{=}\@w{PREPARE}}%
\@x{\@s{16.4} \.{\land} \E\, m \.{\in} msgs \.{:}}%
\@x{\@s{31.61} \.{\land} m \.{.} type \.{=}\@w{PrepareResp}}%
\@x{\@s{31.61} \.{\land} m \.{.} dst \.{=} n}%
\@x{\@s{31.61} \.{\land} m \.{.} src \.{\in} children [ n ]}%
\@x{\@s{31.61} \.{\land} votes [ n ] [ m \.{.} src ] \.{=}\@w{unknown}}%
\@x{\@s{31.61} \.{\land} votes \.{'} \.{=}}%
\@x{\@s{43.02} [ votes {\EXCEPT} {\bang} [ n ] [ m \.{.} src ] \.{=} m \.{.} status ]}%
\@x{\@s{31.61} \.{\land} {\UNCHANGED} {\langle} rmState ,\, children ,\,}%
\@x{\@s{43.02} intermediate\_children ,\,}%
\@x{\@s{43.02} msgs ,\, acks ,\, parent {\rangle}}%
\@pvspace{8.0pt}%

\@x{ Handle2pcCommitDecided ( n ) \.{\defeq}}%
\@x{\@s{16.4} \.{\land} rmState [ n ] \.{=}\@w{PREPARE}}%
\@x{\@s{16.4} \.{\land} AllVotesOk ( n )}%
\@x{\@s{16.4} \.{\land} {\IF} IsRoot ( n )}%
\@x{\@s{27.51} \.{\THEN} {\LET} mc \.{\defeq} MergedChildren ( n ) {\IN}}%
\@x{\@s{38.92} \.{\land} rmState \.{'} \.{=}}%
\@x{\@s{50.33} [ rmState {\EXCEPT} {\bang} [ n ] \.{=}\@w{COMMIT} ]}%
\@x{\@s{38.92} \.{\land} ApplyMerge ( n ,\, mc )}%
\@x{\@s{38.92} \.{\land} acks \.{'} \.{=} [ acks {\EXCEPT} {\bang} [ n ] \.{=}}%
\@x{\@s{50.33} [ c \.{\in} mc \.{\mapsto} {\FALSE} ] ]}%
\@x{\@s{38.92} \.{\land} msgs \.{'} \.{=} msgs \.{\cup}}%
\@x{\@s{50.33} \{ MsgCommit ( n ,\, c ) \.{:} c \.{\in} mc \}}%
\@x{\@s{38.92} \.{\land} {\UNCHANGED} {\langle} votes ,\, parent {\rangle}}%
\@x{\@s{27.51} \.{\ELSE} \.{\land} msgs \.{'} \.{=} msgs \.{\cup}}%
\@x{\@s{38.92} \{ MsgPrepareResp ( n ,\, parent [ n ] ,\,\@w{ok} ) \}}%
\@x{\@s{38.92} \.{\land} {\UNCHANGED} {\langle} rmState ,\, children ,\,}%
\@x{\@s{50.33} intermediate\_children ,\,}%
\@x{\@s{50.33} votes ,\, acks ,\, parent {\rangle}}%
\@pvspace{8.0pt}%
\@x{ Handle2pcAbortDecided ( n ) \.{\defeq}}%
\@x{\@s{16.4} \.{\land} \.{\lor} rmState [ n ] \.{=}\@w{PREPARE}}%
\@x{\@s{27.51} \.{\lor} ( IsRoot ( n ) \.{\land} rmState [ n ] \.{=}\@w{RUNNING} )}%
\@x{\@s{16.4} \.{\land} AnyVoteNo ( n )}%
\@x{\@s{16.4} \.{\land} {\IF} IsRoot ( n )}%
\@x{\@s{27.51} \.{\THEN} {\LET} mc \.{\defeq} MergedChildren ( n ) {\IN}}%
\@x{\@s{38.92} \.{\land} rmState \.{'} \.{=}}%
\@x{\@s{50.33} [ rmState {\EXCEPT} {\bang} [ n ] \.{=}\@w{ABORT} ]}%
\@x{\@s{38.92} \.{\land} ApplyMerge ( n ,\, mc )}%
\@x{\@s{38.92} \.{\land} acks \.{'} \.{=} [ acks {\EXCEPT} {\bang} [ n ] \.{=}}%
\@x{\@s{50.33} [ c \.{\in} mc \.{\mapsto} {\FALSE} ] ]}%
\@x{\@s{38.92} \.{\land} msgs \.{'} \.{=} msgs \.{\cup}}%
\@x{\@s{50.33} \{ MsgAbort ( n ,\, c ) \.{:} c \.{\in} mc \}}%
\@x{\@s{38.92} \.{\land} {\UNCHANGED} {\langle} votes ,\, parent {\rangle}}%
\@x{\@s{27.51} \.{\ELSE} \.{\land} msgs \.{'} \.{=} msgs \.{\cup}}%
\@x{\@s{38.92} \{ MsgPrepareResp ( n ,\, parent [ n ] ,\,\@w{no} ) \}}%
\@x{\@s{38.92} \.{\land} {\UNCHANGED} {\langle} rmState ,\, children ,\,}%
\@x{\@s{50.33} intermediate\_children ,\,}%
\@x{\@s{50.33} votes ,\, acks ,\, parent {\rangle}}%
\@pvspace{8.0pt}%

\@x{ Handle2pcCommitRequest ( n ) \.{\defeq}}%
\@x{\@s{16.4} \.{\land} {\lnot} IsRoot ( n )}%
\@x{\@s{16.4} \.{\land} rmState [ n ] \.{\in} \{\@w{RUNNING} ,\,\@w{PREPARE} \}}%
\@x{\@s{16.4} \.{\land} \E\, m \.{\in} msgs \.{:}}%
\@x{\@s{31.61} \.{\land} m \.{.} type \.{=}\@w{Commit}}%
\@x{\@s{31.61} \.{\land} m \.{.} dst \.{=} n}%
\@x{\@s{31.61} \.{\land} {\LET} mc \.{\defeq} MergedChildren ( n ) {\IN}}%
\@x{\@s{43.02} \.{\land} parent \.{'} \.{=} RecordParent ( n ,\, m \.{.} src )}%
\@x{\@s{43.02} \.{\land} rmState \.{'} \.{=}}%
\@x{\@s{54.43} [ rmState {\EXCEPT} {\bang} [ n ] \.{=}\@w{COMMIT} ]}%
\@x{\@s{43.02} \.{\land} ApplyMerge ( n ,\, mc )}%
\@x{\@s{43.02} \.{\land} acks \.{'} \.{=} [ acks {\EXCEPT} {\bang} [ n ] \.{=}}%
\@x{\@s{54.43} [ c \.{\in} mc \.{\mapsto} {\FALSE} ] ]}%
\@x{\@s{43.02} \.{\land} msgs \.{'} \.{=} msgs \.{\cup}}%
\@x{\@s{54.43} \{ MsgCommit ( n ,\, c ) \.{:} c \.{\in} mc \}}%
\@x{\@s{54.43} \.{\cup} \{ MsgAck ( n ,\, m \.{.} src ) \}}%
\@x{\@s{43.02} \.{\land} {\UNCHANGED} {\langle} votes {\rangle}}%
\@pvspace{8.0pt}%
\@x{ Handle2pcAbortRequest ( n ) \.{\defeq}}%
\@x{\@s{16.4} \.{\land} {\lnot} IsRoot ( n )}%
\@x{\@s{16.4} \.{\land} rmState [ n ] \.{\in} \{\@w{RUNNING} ,\,\@w{PREPARE} \}}%
\@x{\@s{16.4} \.{\land} \E\, m \.{\in} msgs \.{:}}%
\@x{\@s{31.61} \.{\land} m \.{.} type \.{=}\@w{Abort}}%
\@x{\@s{31.61} \.{\land} m \.{.} dst \.{=} n}%
\@x{\@s{31.61} \.{\land} {\LET} mc \.{\defeq} MergedChildren ( n ) {\IN}}%
\@x{\@s{43.02} \.{\land} parent \.{'} \.{=} RecordParent ( n ,\, m \.{.} src )}%
\@x{\@s{43.02} \.{\land} rmState \.{'} \.{=}}%
\@x{\@s{54.43} [ rmState {\EXCEPT} {\bang} [ n ] \.{=}\@w{ABORT} ]}%
\@x{\@s{43.02} \.{\land} ApplyMerge ( n ,\, mc )}%
\@x{\@s{43.02} \.{\land} acks \.{'} \.{=} [ acks {\EXCEPT} {\bang} [ n ] \.{=}}%
\@x{\@s{54.43} [ c \.{\in} mc \.{\mapsto} {\FALSE} ] ]}%
\@x{\@s{43.02} \.{\land} msgs \.{'} \.{=} msgs \.{\cup}}%
\@x{\@s{54.43} \{ MsgAbort ( n ,\, c ) \.{:} c \.{\in} mc \}}%
\@x{\@s{54.43} \.{\cup} \{ MsgAck ( n ,\, m \.{.} src ) \}}%
\@x{\@s{43.02} \.{\land} {\UNCHANGED} {\langle} votes {\rangle}}%
\@pvspace{8.0pt}%
\@x{ HandleOrphan2pcCommitRequest ( n ) \.{\defeq}}%
\@x{\@s{16.4} \.{\land} {\lnot} IsRoot ( n )}%
\@x{\@s{16.4} \.{\land} rmState [ n ] \.{\in} \{\@w{COMMIT} ,\,\@w{TOMBSTONE} \}}%
\@x{\@s{16.4} \.{\land} \E\, m \.{\in} msgs \.{:}}%
\@x{\@s{31.61} \.{\land} m \.{.} type \.{=}\@w{Commit}}%
\@x{\@s{31.61} \.{\land} m \.{.} dst \.{=} n}%
\@x{\@s{31.61} \.{\land} msgs \.{'} \.{=} msgs \.{\cup}}%
\@x{\@s{43.02} \{ MsgAck ( n ,\, m \.{.} src ) \}}%
\@x{\@s{31.61} \.{\land} {\UNCHANGED} {\langle} rmState ,\, children ,\,}%
\@x{\@s{43.02} intermediate\_children ,\,}%
\@x{\@s{43.02} votes ,\, acks ,\, parent {\rangle}}%
\@pvspace{8.0pt}%
\@x{ HandleOrphan2pcAbortRequest ( n ) \.{\defeq}}%
\@x{\@s{16.4} \.{\land} {\lnot} IsRoot ( n )}%
\@x{\@s{16.4} \.{\land} rmState [ n ] \.{\in} \{\@w{ABORT} ,\,\@w{TOMBSTONE} \}}%
\@x{\@s{16.4} \.{\land} \E\, m \.{\in} msgs \.{:}}%
\@x{\@s{31.61} \.{\land} m \.{.} type \.{=}\@w{Abort}}%
\@x{\@s{31.61} \.{\land} m \.{.} dst \.{=} n}%
\@x{\@s{31.61} \.{\land} msgs \.{'} \.{=} msgs \.{\cup}}%
\@x{\@s{43.02} \{ MsgAck ( n ,\, m \.{.} src ) \}}%
\@x{\@s{31.61} \.{\land} {\UNCHANGED} {\langle} rmState ,\, children ,\,}%
\@x{\@s{43.02} intermediate\_children ,\,}%
\@x{\@s{43.02} votes ,\, acks ,\, parent {\rangle}}%
\@pvspace{8.0pt}%

\@x{ InternalAbort ( n ) \.{\defeq}}%
\@x{\@s{16.4} \.{\land} rmState [ n ] \.{=}\@w{RUNNING}}%
\@x{\@s{16.4} \.{\land} {\LET} mc \.{\defeq} MergedChildren ( n )}%
\@x{\@s{27.51} parentNotify \.{\defeq}}%
\@x{\@s{38.92} {\IF} parent [ n ] \.{\neq}\@w{none}}%
\@x{\@s{38.92} \.{\THEN} \{ MsgPrepareResp ( n ,\, parent [ n ] ,\,\@w{no} ) \}}%
\@x{\@s{38.92} \.{\ELSE} \{ \}}%
\@x{\@s{16.4} {\IN}}%
\@x{\@s{27.51} \.{\land} rmState \.{'} \.{=}}%
\@x{\@s{38.92} [ rmState {\EXCEPT} {\bang} [ n ] \.{=}\@w{ABORT} ]}%
\@x{\@s{27.51} \.{\land} ApplyMerge ( n ,\, mc )}%
\@x{\@s{27.51} \.{\land} acks \.{'} \.{=} [ acks {\EXCEPT} {\bang} [ n ] \.{=}}%
\@x{\@s{38.92} [ c \.{\in} mc \.{\mapsto} {\FALSE} ] ]}%
\@x{\@s{27.51} \.{\land} msgs \.{'} \.{=} msgs \.{\cup}}%
\@x{\@s{38.92} \{ MsgAbort ( n ,\, c ) \.{:} c \.{\in} mc \}}%
\@x{\@s{38.92} \.{\cup} parentNotify}%
\@x{ \.{\land} {\UNCHANGED} {\langle} votes ,\, parent {\rangle}}%
\@pvspace{8.0pt}%

\@x{ Handle2pcAckResponse ( n ) \.{\defeq}}%
\@x{\@s{16.4} \.{\land} rmState [ n ] \.{\in} \{\@w{COMMIT} ,\,\@w{ABORT} \}}%
\@x{\@s{16.4} \.{\land} \E\, m \.{\in} msgs \.{:}}%
\@x{\@s{31.61} \.{\land} m \.{.} type \.{=}\@w{Ack}}%
\@x{\@s{31.61} \.{\land} m \.{.} dst \.{=} n}%
\@x{\@s{31.61} \.{\land} m \.{.} src \.{\in} children [ n ]}%
\@x{\@s{31.61} \.{\land} acks [ n ] [ m \.{.} src ] \.{=} {\FALSE}}%
\@x{\@s{31.61} \.{\land} acks \.{'} \.{=}}%
\@x{\@s{43.02} [ acks {\EXCEPT} {\bang} [ n ] [ m \.{.} src ] \.{=} {\TRUE} ]}%
\@x{\@s{31.61} \.{\land} {\UNCHANGED} {\langle} rmState ,\, children ,\,}%
\@x{\@s{43.02} intermediate\_children ,\,}%
\@x{\@s{43.02} msgs ,\, votes ,\, parent {\rangle}}%
\@pvspace{8.0pt}%
\@x{ ForgetCtx ( n ) \.{\defeq}}%
\@x{\@s{16.4} \.{\land} rmState [ n ] \.{\in} \{\@w{COMMIT} ,\,\@w{ABORT} \}}%
\@x{\@s{16.4} \.{\land} AllAcked ( n )}%
\@x{\@s{16.4} \.{\land} rmState \.{'} \.{=}}%
\@x{\@s{27.51} [ rmState {\EXCEPT} {\bang} [ n ] \.{=}\@w{TOMBSTONE} ]}%
\@x{\@s{16.4} \.{\land} {\UNCHANGED} {\langle} children ,\,}%
\@x{\@s{27.51} intermediate\_children ,\,}%
\@x{\@s{27.51} msgs ,\, votes ,\, acks ,\, parent {\rangle}}%
\@pvspace{8.0pt}%

\@x{ AddIntermediateParticipant ( n ,\, newChild ) \.{\defeq}}%
\@x{\@s{16.4} \.{\land} rmState [ n ] \.{\in}}%
\@x{\@s{27.51} \{\@w{RUNNING} ,\,\@w{PREPARE} ,\,\@w{COMMIT} ,\,\@w{ABORT} \}}%
\@x{\@s{16.4} \.{\land} newChild \.{\in} Node \.{\setminus} \{ n \}}%
\@x{\@s{16.4} \.{\land} newChild \.{\notin} children [ n ]}%
\@x{\@s{16.4} \.{\land} newChild \.{\notin} intermediate\_children [ n ]}%
\@x{\@s{16.4} \.{\land} intermediate\_children \.{'} \.{=}}%
\@x{\@s{27.51} [ intermediate\_children {\EXCEPT} {\bang} [ n ] \.{=}}%
\@x{\@s{38.92} @ \.{\cup} \{ newChild \} ]}%
\@x{\@s{16.4} \.{\land} {\UNCHANGED} {\langle} rmState ,\, children ,\,}%
\@x{\@s{27.51} msgs ,\, votes ,\, acks ,\, parent {\rangle}}%
\@pvspace{8.0pt}%

\@x{ Next \.{\defeq}}%
\@x{\@s{16.4} \.{\lor} RootStartToCommit}%
\@x{\@s{16.4} \.{\lor} \E\, n \.{\in} Node \.{:}}%
\@x{\@s{31.61} Handle2pcPrepareRequest ( n )}%
\@x{\@s{16.4} \.{\lor} \E\, n \.{\in} Node \.{:}}%
\@x{\@s{31.61} Handle2pcDuplicatePrepareRequest ( n )}%
\@x{\@s{16.4} \.{\lor} \E\, n \.{\in} Node \.{:}}%
\@x{\@s{31.61} HandleOrphan2pcPrepareRequest ( n )}%
\@x{\@s{16.4} \.{\lor} \E\, n \.{\in} Node \.{:}}%
\@x{\@s{31.61} Handle2pcPrepareResponse ( n )}%
\@x{\@s{16.4} \.{\lor} \E\, n \.{\in} Node \.{:}}%
\@x{\@s{31.61} Handle2pcCommitDecided ( n )}%
\@x{\@s{16.4} \.{\lor} \E\, n \.{\in} Node \.{:}}%
\@x{\@s{31.61} Handle2pcAbortDecided ( n )}%
\@x{\@s{16.4} \.{\lor} \E\, n \.{\in} Node \.{:}}%
\@x{\@s{31.61} Handle2pcCommitRequest ( n )}%
\@x{\@s{16.4} \.{\lor} \E\, n \.{\in} Node \.{:}}%
\@x{\@s{31.61} Handle2pcAbortRequest ( n )}%
\@x{\@s{16.4} \.{\lor} \E\, n \.{\in} Node \.{:}}%
\@x{\@s{31.61} HandleOrphan2pcCommitRequest ( n )}%
\@x{\@s{16.4} \.{\lor} \E\, n \.{\in} Node \.{:}}%
\@x{\@s{31.61} HandleOrphan2pcAbortRequest ( n )}%
\@x{\@s{16.4} \.{\lor} \E\, n \.{\in} Node \.{:}}%
\@x{\@s{31.61} InternalAbort ( n )}%
\@x{\@s{16.4} \.{\lor} \E\, n \.{\in} Node \.{:}}%
\@x{\@s{31.61} Handle2pcAckResponse ( n )}%
\@x{\@s{16.4} \.{\lor} \E\, n \.{\in} Node \.{:}}%
\@x{\@s{31.61} ForgetCtx ( n )}%
\@x{\@s{16.4} \.{\lor} \E\, n ,\, newChild \.{\in} Node \.{:}}%
\@x{\@s{31.61} AddIntermediateParticipant ( n ,\, newChild )}%
\@pvspace{8.0pt}%
\@x{ Spec \.{\defeq} Init \.{\land} {\Box} [ Next ]_{ Vars}}%
\@pvspace{8.0pt}%

\@x{ Consistency \.{\defeq}}%
\@x{\@s{16.4} \A\, n1 ,\, n2 \.{\in} Node \.{:}}%
\@x{\@s{27.51} {\lnot} ( rmState [ n1 ] \.{=}\@w{COMMIT}}%
\@x{\@s{38.92} \.{\land} rmState [ n2 ] \.{=}\@w{ABORT} )}%
\@pvspace{8.0pt}%
\@x{}\bottombar\@xx{}%